\title{Distributed Wireless Power Transfer with \\ Energy Feedback 
\footnote{This paper has been presented in part at IEEE Wireless Communications and Networking Conference (WCNC), April 3-6, 2016, Doha, Qatar \cite{C_LZ:2016}.  \\
\indent The authors are with the Department of Electrical and Computer Engineering, National University of Singapore (email:  s.lee@u.nus.edu, elezhang@nus.edu.sg). R. Zhang is also with the Institute for Infocomm Research, A*STAR, Singapore. } 
}
\author{Seunghyun Lee and  Rui Zhang}
\newtheorem{lemma}{\underline{Lemma}}[section]
\newtheorem{corollary}{\underline{Corollary}}[section]
\newtheorem{proposition}{\underline{Proposition}}[section]
\def\l{\left}
\def\r{\right}
\def\({\left(}
\def\){\right)}
\def\b0{{\mathbf{0}}}
\def\cA{\mathcal{A}}
\def\cB{\mathcal{B}}
\def\cM{\mathcal{M}}
\newcommand{\nn}{\nonumber}
\begin{document}
\maketitle \thispagestyle{empty}
\begin{abstract}
Energy beamforming (EB) is a key technique for achieving efficient radio-frequency (RF) transmission enabled wireless energy transfer (WET). By optimally designing the waveforms from multiple energy transmitters (ETs) over the wireless channels, they can be constructively combined at the energy receiver (ER) to achieve an EB gain that scales with the number of ETs. However, the optimal design of EB waveforms requires accurate channel state information (CSI) at the ETs, which is challenging to obtain practically, especially in a distributed system with ETs at separate locations. In this paper, we study practical and efficient channel training methods to achieve optimal EB in a distributed WET system. We propose two protocols with and without centralized coordination, respectively, where distributed ETs either sequentially or in parallel adapt their transmit phases based on a low-complexity energy feedback from the ER. The energy feedback only depends on the received power level at the ER, where each feedback indicates one particular transmit phase that results in the maximum harvested power over a set of previously used phases. Simulation results show that the two proposed training protocols converge very fast in practical WET systems even with a large number of distributed ETs, while the protocol with sequential ET phase adaptation is also analytically shown to converge to the optimal EB design with perfect CSI by increasing the training time. Numerical results are also provided to evaluate the performance of the proposed distributed EB and training designs as compared to other benchmark schemes.
\end{abstract}

\begin{IEEEkeywords}
Wireless energy transfer, energy beamforming, distributed beamforming, channel training, energy feedback.
\end{IEEEkeywords}

\setlength{\baselineskip}{1.3\baselineskip}

\section{Introduction}
Radio-frequency (RF) transmission enabled wireless energy transfer (WET) is a promising technology to achieve perpetual operation of wireless devices by supplying energy continually over the air. It has been reported that RF-based WET can already deliver tens of microwatts power to wireless devices from a distance of more than 10 meters.\footnote{Please refer to the company website of Powercast Corp. (http://www.powercastco.com) for more information on RF-based WET.} Motivated by this, wireless powered communication has recently emerged as a new area of research in wireless communication, where communication devices with typically low-power consumptions such as RF identification (RFID) tags and sensors distributed in a wide area are charged via WET. In particular, two appealing lines of research are  \emph{wireless powered communication network} (WPCN) \cite{J_JZ:2014,J_LZC:2014} and \emph{simultaneous wireless information and power transfer} (SWIPT) \cite{J_ZH:2013,J_XLZ:2014}, which have spurred significant research efforts (see, e.g., \cite{J_BHZ:2015, J_HZ:2015, J_DZNPSSV:2015, J_LWNKH:2015, J_BZZ:2016, J_HZZ:2016, J_KTNZNS:2014} and the references therein). In WPCN, wireless devices are powered by dedicated downlink WET for their uplink wireless information transmission (WIT); while in SWIPT, a dual use of RF signals is considered for simultaneous downlink WET and WIT. In both WPCN and SWIPT, efficient design of WET to compensate the significant power loss of RF signals over long distance is essential.  

\begin{figure}
\centering
\subfigure[In-band WET and WIT]{
\centering
\includegraphics[width=7cm]{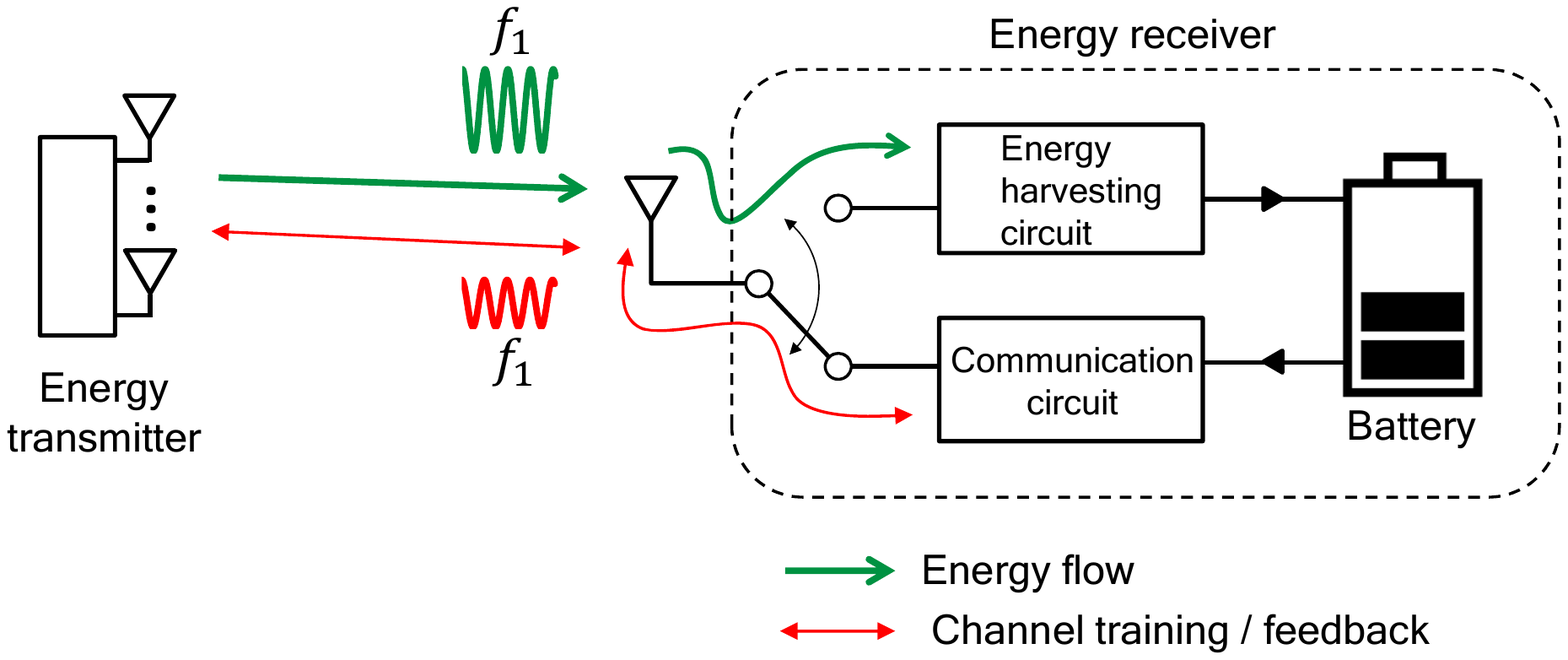}\label{Fig:InBand}} 
\subfigure[Out-band WET and WIT]{
\centering
\includegraphics[width=7cm]{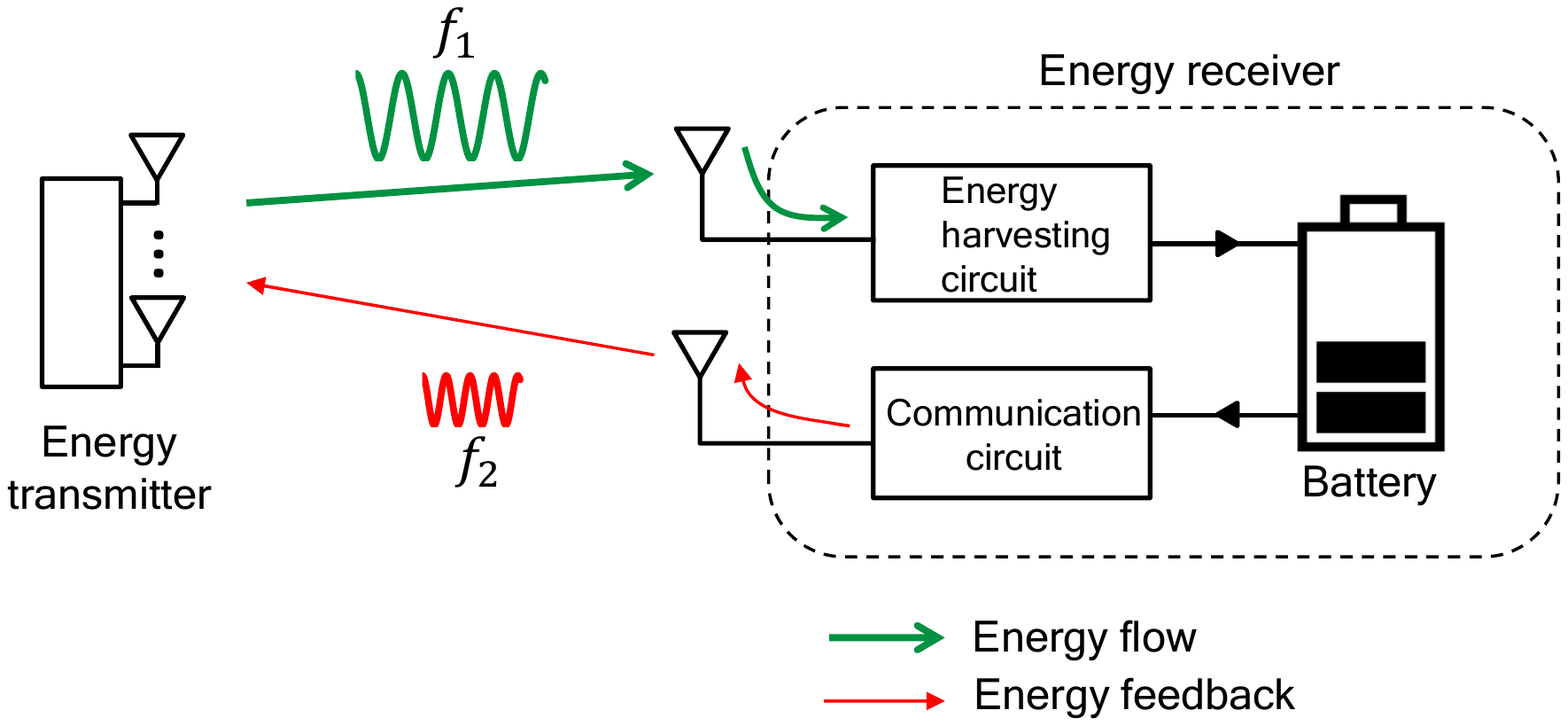}\label{Fig:OutBand}} 
\caption{Comparison of in-band versus out-band WET and WIT.}
\label{Fig:ER}
\end{figure}

In WET systems, \emph{energy beamforming} (EB) is a key technique used to significantly enhance the WET efficiency. With EB, the signal waveforms from multiple transmit antennas are optimally designed such that after propagating through different wireless channels they are constructively combined at a destined energy receiver (ER) to maximize the received signal amplitude and thus average power. However, in practice the energy transmitters (ETs)  need to acquire accurate knowledge of the channel state information (CSI) to the ER to achieve optimal EB. To this end, various CSI acquisition methods have been proposed in the literature depending on the type of ER model assumed. In general, two practical ER models have been considered for \emph{in-band} and \emph{out-band} WET and WIT, respectively, as shown in Fig.~\ref{Fig:ER}. 

For in-band WET, as shown in Fig.~\ref{Fig:InBand}, the WET and WIT are assumed to be implemented in the same frequency band, and  a single antenna is used at the ER for both energy harvesting and communication in a time-division-duplexing (TDD) manner. In this case, similar to the channel estimation techniques in conventional wireless communication \cite{J_LHLGRA:2008}, the ET can send pilot signals to the ER that uses the communication circuit to estimate the channel and then send back the channel estimation to the ET for implementing EB \cite{J_YHG:2014}. However, this forward training approach incurs significant training and feedback overhead as the number of antennas at the ET increases. To overcome this difficulty, an alternative approach of reverse training is proposed in \cite{J_ZZ:2015, J_ZZ:2015_b}, where  training signals are sent by the ER to the ET to estimate the reverse-link channel that is assumed to be reciprocal of the forward-link channel over which the EB is implemented; as a result, the training overhead is independent of the number of antennas at the ET. However, it should be noted that unlike the conventional reverse-link training based channel estimation in wireless communication, a new design metric called \emph{net energy maximization} needs to be considered in the case of WET, where the net energy refers to the average harvested energy minus that used for the reverse link training at the ER, so as to optimally balance the trade-off between minimizing its energy consumption for sending pilot signals and maximizing its harvested energy with a larger EB gain due to more accurately estimated channels at the ET (which, however, requires more training energy used).

On the other hand, for the case of out-band WET where the WET and WIT are implemented over different frequency bands, as shown in Fig.~\ref{Fig:OutBand}, two antennas are used at the ER for energy harvesting and communication over two orthogonal frequencies, respectively. Note that in the in-band WET, the front end of the ER needs to be re-designed to switch between energy harvesting and communication; while the ER in the out-band WET can be more easily implemented with the off-the-shelf antenna and rectifier (integrated as a so-called rectenna) for RF energy harvesting and a separate antenna for communication. However, unlike in-band WET, in out-band WET the ET cannot obtain the channel knowledge to the energy harvesting antenna at the ER by conventional forward/reverse training methods due to the following two reasons. Firstly, the communication antenna at the ER operates at a different frequency from that of the energy harvesting antenna. Secondly, practical rectifiers for RF energy harvesting do not have the baseband processing capability required for channel training or estimation. Therefore, a new channel estimation method based on the feedback only pertaining to the measured power level at the energy harvesting antenna of the ET, namely \emph{energy feedback}, is proposed in \cite{J_XZ:2014, J_XZ:2016}. In this method, the energy feedback from the ER is used for iteratively refining the estimate of the multiple-input multiple-output (MIMO) channel from a multi-antenna ET to multiple ER antennas by applying the cutting-plane method in convex optimization. 

\begin{figure} 
\centering
\includegraphics[width=10cm]{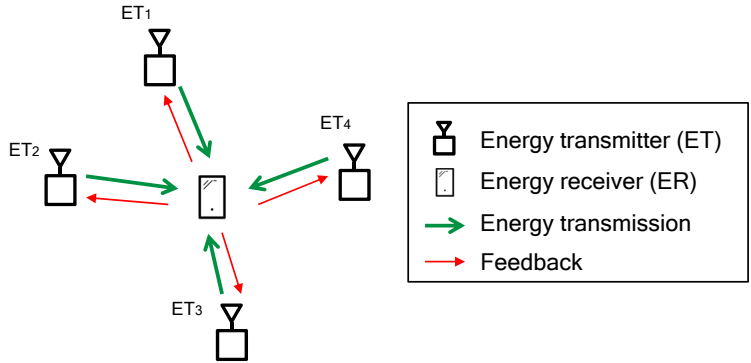}
\caption{A WET system with distributed EB based on the ER feedback.} \label{Fig:SystemModel}
\end{figure}

In this paper, we consider the optimal EB design in a distributed WET system with multiple separated single-antenna ETs that cooperatively send power to one single-antenna ER assuming the out-band WET model, as illustrated in Fig.~\ref{Fig:SystemModel} for the case of four ETs. Note that unlike \cite{J_XZ:2014, J_XZ:2016}, where all transmit antennas are equipped at one single ET and thus their channels to the ER can be jointly estimated, the distributed EB considered in this paper needs to be implemented over separate ETs without the need of their centralized processing. In practice, EB via distributed ETs each with isotropic transmission has the advantage of avoiding high power intensity from any single ET to the ER as compared to the conventional EB by a single multi-antenna ET as considered in \cite{J_YHG:2014, J_ZZ:2015, J_ZZ:2015_b, J_XZ:2014, J_XZ:2016}, thus significantly improving its safety in practical operation. For example, distributed antennas for SWIPT assuming perfect CSI at the transmitters are considered in \cite{J_LLZ:2015}. Motivated by the above discussions, in this paper we focus on studying new channel training methods for distributed EB, based on the practical energy feedback from the ER. It is worth noting that our proposed distributed channel training and beamforming schemes can also be applied to other scenarios in wireless communication based on energy feedback from the receiver, such as cognitive radio networks as studied in \cite{J_NG:2013}, \cite{J_GS:2015} when the transmit antennas are not co-located.   

The main contributions of this paper are summarized as follows.
\begin{itemize}
 \item First, we propose two practical channel training methods for implementing distributed EB based on the ER's energy feedback, namely \emph{sequential training} and \emph{parallel training} with and without the centralized coordination among the ETs, respectively. In sequential training, the ETs take turns to adapt their transmit phases over different phase adaptation intervals with the phases of all other non-adapting ETs being fixed. As a result, the transmit phases of all ETs are optimally aligned in a sequential manner to maximize the EB gain at the ER. In contrast, in the parallel training case, each ET independently decides on whether adapting the transmit phase at each interval by flipping a coin with a certain  probability. Consequently, a random subset of ETs simultaneously adjust their transmit phases to match the phase of the received sum-signal from all other non-adapting ETs transmitting with fixed phases, to maximize the received energy at the ER with best effort. 
 
\item The above two training designs are both based on a new transmit phase adaptation algorithm proposed, where each adapting ET iteratively tunes the phase of its transmit signal to optimally align to the sum-signal from all other non-adapting ETs with fixed transmit phases at the ER. Specifically, for a given integer $B\geq 1$, each adapting ET transmits with a sequence of $2^B$ preassigned phases; then the ER feeds back $B$ bits to the adapting ET indicating one of the $2^B$ phases that results in the largest harvested power, based on which each adapting ET is able to iteratively localize the target phase to maximize the energy received by the ER. This phase adaptation algorithm is also extended to the case where the ER has a memory that stores the value of the maximum harvested power up to each training time. It is shown that by exploiting this information, the adapting ET can save the number of transmit phases in each training compared to the case without memory, thus further improving the training efficiency.

\item Next, we analyze the convergence performance of the proposed schemes. First, it is shown that the proposed phase adaptation algorithms (with and without memory) can both converge to the target phase exponentially fast with increasing training time in each phase adaptation interval. It is also proved that $B=1$, i.e., one bit per ER feedback for the algorithm without memory and two bits per ER feedback for the algorithm with memory, is optimal in terms of estimated transmit phase accuracy given the same training time. In addition, under the same training time and value of $B$, the algorithm with memory always performs better than that without memory. Furthermore, the proposed distributed EB protocol with sequential training is proved to converge to the optimal EB solution with perfect CSI as the training time increases.

\item Finally, we provide extensive simulation results under various setups to compare the training and beamforming performance of our proposed schemes in distributed WET systems, as compared to other heuristic designs. It is shown by simulation that the convergence speed of the proposed sequential training is in general faster than that of parallel training, at the cost of additional scheduling coordination among the ETs.

\end{itemize}

The rest of this paper is organized as follows. Section~\ref{Section:SystemModel} introduces the system model for distributed WET. Section~\ref{Section:Protocol} presents two proposed distributed EB protocols with sequential training and parallel training, respectively. Section~\ref{Section:PhaseAdaptation} presents the key phase adaptation algorithm used in both protocols, with or without the ER memory. Section~\ref{Section:Performance} presents the analytical results on the performance of the proposed schemes. Section~\ref{Section:Simulation} provides simulation results. Finally, Section~\ref{Section:Conclusion} concludes the paper.

\section{System Model} \label{Section:SystemModel}
As shown in Fig.~\ref{Fig:SystemModel}, we consider a distributed WET system where $M>1$ single-antenna ETs collaboratively send wireless power to a single-antenna ER. For the ER, we adopt the out-band WET and WIT model in Fig.~\ref{Fig:OutBand}, and assume that it can send energy feedback to all ETs at a given frequency different from that for the WET.

Since energy signals carry no information, for simplicity we assume the transmit signal of ET$_m$, $m=1,...,M$, to be an unmodulated carrier signal with phase offset $\phi_m\in [-\pi, \pi)$, which is expressed as
 \begin{equation} \label{Eq:TxSignal1}
s_m(t) =  \sqrt{2P}\cos(2\pi f_c t + \phi_m), 
\end{equation}
where $P$ denotes the transmit power of each ET, and $f_c$ is the carrier frequency. Each transmitted signal propagates through a  multi-path wireless channel in general. Thus, the received signal $r(t)$ at  ER from all ETs is expressed as 
\begin{align} \label{Eq:RxSignal1}
r(t) = \sum_{m=1}^M \sum_{l=1}^{L_m} a_{m,l} \sqrt{2P}  \cos(2\pi f_c (t - \tau_{m,l}) + \phi_m),
\end{align}
where $L_m$ is the number of signal paths from ET$_m$ to ER, and $a_{m,l}$, $\tau_{m,l}$ are the signal attenuation and delay of the $l$th path, respectively, with $l=1,...,L_m$. The received signal given in \eqref{Eq:RxSignal1} can be simplified as
\begin{equation} \label{Eq:RxSignal2}
r(t) = \sqrt{2P}\sum_{m=1}^M \sqrt{\beta_m} \cos(2\pi f_c t + \phi_m - \theta_m),
\end{equation}
where $\beta_m$ and $\theta_m$ are the aggregate power gain and phase shift of the multi-path channel from ET$_m$ to ER, respectively, given by
\begin{equation*}
\beta_m = \l(\sum_{l=1}^{L_m} a_{m,l}\cos(2\pi f_c \tau_{m,l})\r)^2 + \l(\sum_{l=1}^{L_m} a_{m,l}\sin(2\pi f_c \tau_{m,l})\r)^2, 
\end{equation*}
and 
\begin{equation*}
\theta_m = \arctan\l(\frac{\sum_{l=1}^{L_m} a_{m,l}\sin(2\pi f_c \tau_{m,l})}{\sum_{l=1}^{L_m} a_{m,l}\cos(2\pi f_c \tau_{m,l})}\r).
\end{equation*}
The average harvested power at ER, denoted by $Q$, is then given by
\begin{align}
Q &=  \frac{\rho}{T}\int_{0}^{T} |r(t)|^2 dt \nn\\
& = \rho P \bigg(\sum_{m=1}^M \beta_m  
 + \sum_{i,j=1, i\neq j}^M  \sqrt{\beta_i \beta_j} \cos((\phi_i - \theta_i) - (\phi_j - \theta_j))     \bigg), \label{Eq:P_r}
\end{align}
where $0<\rho \leq 1$ is the energy conversion efficiency at ER, and since it is a constant we assume $\rho=1$ in the sequel for notational convenience; and $T = \frac{1}{f_c}$ is the period of the carrier signal.

If each ET$_m$ perfectly knows the phase shift of its channel $\theta_m$, the optimal transmit phase that maximizes the harvested power $Q$ in \eqref{Eq:P_r} is given by $\phi_m^\star = \theta_m + c$, $m=1,...,M$, where $c$ is an arbitrary constant. We refer to this case as the \emph{optimal EB} in the sequel. With the optimal EB, the maximum harvested power at ER, denoted by $Q^\star$, is thus given by
\begin{equation} \label{Eq:P_r max}
Q^\star = P\l(\sum_{m=1}^M \beta_m + \sum_{i,j=1, i\neq j}^M\sqrt{\beta_i \beta_j}     \r).
\end{equation}
In practice, only imperfect CSI is available at each ET, and thus the maximum harvested power in \eqref{Eq:P_r max} with the optimal EB only provides a performance upper bound for practical distributed WET systems. In this paper, we propose new training designs of low complexity for practical distributed WET systems to maximize the EB gain and approach that of the optimal EB. Generally speaking, in our proposed training schemes the ETs in a distributed WET system adjust their transmit phases in a distributed manner, based on the energy feedback from ER. After all $M$ ETs set their transmit phases to be $\phi_m = \bar{\phi}_m$, $m=1,...,M$, via the proposed training scheme, the harvested power at ER, denoted by $Q_\text{d}$, is given by (by substituting $\phi_m = \bar{\phi}_m$, $m=1,...,M$ into \eqref{Eq:P_r})
\begin{equation} \label{Eq:P_r Protocol}
Q_\text{d} = P \bigg(\sum_{m=1}^M \beta_m  
 + \sum_{i,j=1, i\neq j}^M  \sqrt{\beta_i \beta_j} \cos((\bar{\phi}_i - \theta_i) - (\bar{\phi}_j - \theta_j))     \bigg). 
\end{equation}
Note that $Q_\text{d} \leq Q^\star$ in general according to \eqref{Eq:P_r max}. 
We define the resulting efficiency of the distributed WET system, denoted by $\eta$, as the ratio between $Q_\text{d}$ and $Q^\star$ given in \eqref{Eq:P_r Protocol} and \eqref{Eq:P_r max}, respectively, i.e.,
\begin{equation} \label{Eq:Efficiency}
\eta = \frac{Q_\text{d}}{Q^\star},
\end{equation}
where $0<\eta\leq 1$ since $Q_\text{d}\leq Q^\star$.

\section{Distributed EB Protocols} \label{Section:Protocol}
In this section, we present two distributed EB protocols for all $M$ ETs to collaboratively send power to the ER with \emph{sequential training} and \emph{parallel training}, respectively. Both protocols are based on iterative phase adaptation of ETs, while they differ in the number of ETs that adapt the transmit phase at each time. The total training time is divided into equal  \emph{phase adaptation intervals} in each of which one (in the case of sequential training) or more (in the case of parallel training) ETs may adjust their transmit phases based on the ER energy feedback, referred to as ``adapting ETs''. The adapting ETs align their signals to the sum-signal of other non-adapting ETs with fixed transmit phases at the ER to maximize the total harvested power, via a new transmit phase adaptation algorithm (to be presented later in Section~\ref{Section:PhaseAdaptation}).  The details of sequential- and parallel-training based distributed EB protocols are provided in the following two subsections, respectively.

\subsection{Distributed EB Protocol with Sequential Training} \label{Subsection:Sequential}

\begin{figure} 
\centering
\includegraphics[width=12cm]{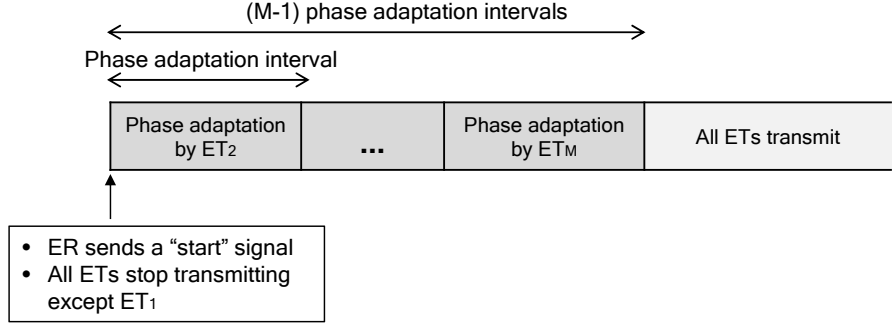}
\caption{The distributed EB protocol with sequential training.} \label{Fig:Protocol_Sequential}
\end{figure}

In this subsection, we present the distributed EB protocol with sequential training, which is illustrated in Fig.~\ref{Fig:Protocol_Sequential} and explained as follows. 
\begin{itemize}
\item To initiate distributed EB, the ER sends a ``start" signal to all $M$ ETs.\footnote{In practice, ER may send the ``start" signal when the harvested power becomes lower than some predefined threshold due to channel variation. }

\item Once the ETs receive the ``start" signal, they stop transmitting, except ET$_1$ which transmits with an arbitrary fixed phase $\bar{\phi}_1$. Without loss of generality, we assume $\bar{\phi}_1 = 0$. 

\item ET$_2$ adapts its phase to optimally align its transmitted signal to  ET$_1$'s signal at ER. After this interval, it continues to transmit with its adapted phase $\bar{\phi}_2$.

\item ET$_3$ adapts its phase to optimally align its signal to ET$_1$'s and ET$_2$'s sum-signal at ER. After this interval, it continues to transmit with its adapted phase $\bar{\phi}_3$.

\item The above procedure is repeated until ET$_M$ adapts its transmit phase; after that all $M$ ETs transmit with their adapted phases.
\end{itemize}

Note that in the above protocol with sequential training, only one ET adapts its transmit phase at each phase adaptation interval, by following a preassigned order. As a result, the transmit signals of different ETs are sequentially aligned to ET$_1$'s signal at ER, to achieve the optimal EB in \eqref{Eq:P_r max} using $(M-1)$ phase adaptation intervals in total. Notice that the sequential training requires scheduling coordination among the ETs, and can be practically implemented via e.g., the token-ring based protocol, where ETs are connected in a wired/wireless ring network and each adapting ET receives and transmits a token (which is a given sequence of bits) from/to another ET next to it in the ring at the beginning/end of each phase adaptation interval. The convergence performance of the sequential training will be analyzed in Section~\ref{Section:Performance}.

\subsection{Distributed EB Protocol with Parallel Training} \label{Subsection:parallel}

\begin{figure} 
\centering
\includegraphics[width=12cm]{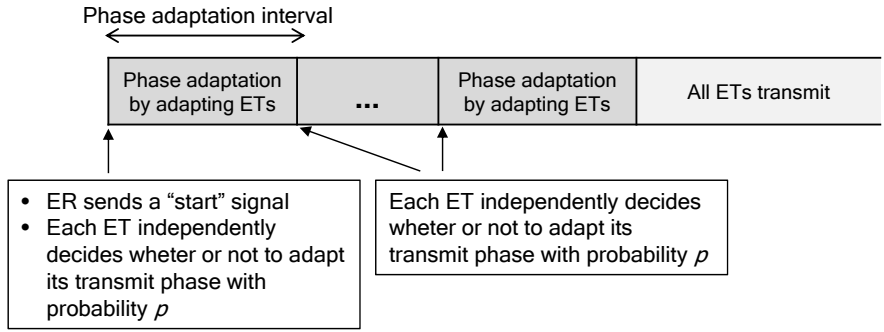}
\caption{The distributed EB protocol with parallel training.} \label{Fig:Protocol_parallel}
\end{figure}

In this subsection, we present another distributed EB protocol with parallel training, which is illustrated in Fig.~\ref{Fig:Protocol_parallel} and explained as follows.
\begin{itemize}
\item To initiate distributed EB, ER sends a ``start" signal to all $M$ ETs.

\item After all ETs receive the ``start" signal, each ET independently  decides whether or not to adapt its transmit phase, based on the outcome of randomly flipping a coin with a biased probability $p$, where $p$ denotes the probability of adapting the phase, with $0<p<1$. The adapting ETs then  simultaneously adjust their phases to optimally align their signals to the sum-signal of all other non-adapting ETs (each transmits with its phase unchanged) at ER. 

\item The above procedure is repeated for a given number of phase adaptation intervals; then all $M$ ETs transmit with their adapted phases.
\end{itemize}

Note that in the above parallel training, it may happen at each phase adaptation interval that all the ETs decide to adapt or not to adapt while in both cases the phase adaptation is not effective. However, such events are of very low probability in practice with sufficiently large number of $M$ and by setting $p$ around $0.5$, since the probability that all ETs adapt is $p^M$ and that for all not to adapt is $(1-p)^M$.

The above parallel training and the sequential training presented in Section~\ref{Subsection:Sequential} are compared in more detail as follows. Unlike the sequential training where ETs take turns for phase adaptation by following a preassigned order, in the parallel training case, a subset of adapting ETs is randomly selected at each phase adaptation interval for simultaneous phase adaptation. As a result, if the training time per interval is sufficiently large, the phase of the adapting ET will converge to that of the optimal EB (subject to a constant phase shift for all ETs) in the case of sequential training; while in the case of parallel training, the phases of all adapting ETs will converge to the same value that optimally aligns their sum-signal to that of non-adapting ETs whose transmit phases, however, may not be optimal. Thus, the parallel training needs to be repeated over intervals with randomized subsets of adapting ETs, to iteratively converge to the optimal EB with best effort. However, the number of phase adaptation intervals required can be arbitrary whereas the sequential training requires only $(M-1)$ phase adaptation intervals to achieve the optimal EB (see Fig.~\ref{Fig:Protocol_Sequential}). Nonetheless, it is also noted that the parallel training does not require the additional scheduling coordination among ETs as in the sequential training case, thus reducing the complexity for implementation. The convergence performance of the parallel training will be evaluated via simulation in Section~\ref{Section:Simulation}.

\section{Phase Adaptation Algorithm} \label{Section:PhaseAdaptation}

\begin{figure} 
\centering
\includegraphics[width=10cm]{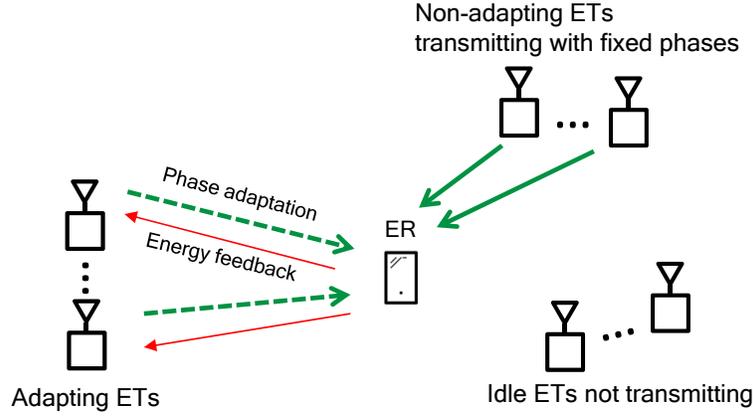}
\caption{Illustration of transmit phase adaptation by adapting ETs based on energy feedback from ER.} \label{Fig:Adapt}
\end{figure}

In this section, we present a new phase adaptation algorithm for the adapting ETs in both the sequential and parallel training protocols proposed in Section~\ref{Section:Protocol}. As illustrated in Fig.~\ref{Fig:Adapt}, in each phase adaptation interval, all ETs can be divided into three groups without loss of generality, namely \emph{adapting ETs} that adapt their transmit phases, \emph{non-adapting ETs} that do not adapt but transmit with fixed phases, and \emph{idle ETs} that neither adapt nor transmit. For convenience, we denote the set of adapting ETs, non-adapting ETs, and idle ETs as $\cM_\text{A}$, $\cM_\text{NA}$, and $\cM_\text{I}$, respectively, which are mutually exclusive and satisfy that $\cM_\text{A}\cup \cM_\text{NA} \cup \cM_\text{I} = \{1,2,...,M\}$. For sequential training, at the $m$th phase adaptation interval ($1\leq m \leq M-1$), it follows that $\cM_\text{A} = \{m+1\}$, $\cM_\text{NA} = \{1,...,m\}$, and $\cM_\text{I} = \{m+2,...,M\}$; whereas for parallel training, $\cM_\text{A}$ and $\cM_\text{NA}$ are randomly generated in each phase adaptation interval, with $\cM_\text{I} = \emptyset$. 

Let the adapting ETs transmit with a common phase $\psi$ in each phase adaptation interval, i.e., $\phi_m = \psi$, $m\in\cM_\text{A}$, and the non-adapting ETs transmit with arbitrary fixed phases, i.e., $\phi_m = \bar{\phi}_m$, $m\in\cM_\text{NA}$. For both the cases of sequential and parallel training, from \eqref{Eq:P_r} we can re-express the received signal at ER from all transmitting ETs (including both adapting and non-adapting ETs) as 
\begin{equation} \label{Eq:RxSignal4}
r(t) = \sqrt{2P}\l(\sum_{m\in\cM_\text{A}} \sqrt{\beta_m} \cos(2\pi f_c t + \psi - \theta_m) + \sum_{m\in\cM_\text{NA}} \sqrt{\beta_m} \cos(2\pi f_c t + \bar{\phi}_m - \theta_m)\r).
\end{equation}
It can then be shown that the average harvested power at the ER can be expressed as a function of $\psi$ as 
\begin{align}
Q(\psi) &=  \frac{1}{T}\int_{0}^{T} |r(t)|^2 dt \nn\\
& =  P \l(\alpha_\text{A} + \alpha_\text{NA} + 2\sqrt{\alpha_\text{A} \alpha_\text{NA}} \cos(\varphi_\text{A}(\psi) + \varphi_\text{NA})\r), \label{Eq:P_r 2}
\end{align}
where
\begin{equation*}
\alpha_\text{A} \triangleq \l(\sum_{m\in\cM_\text{A}}\sqrt{\beta_m}\cos(\psi - \theta_m) \r)^2 + \l(\sum_{m\in\cM_\text{A}}\sqrt{\beta_m}\sin(\psi - \theta_m) \r)^2,
\end{equation*}
\begin{equation*}
\alpha_\text{NA} \triangleq \l(\sum_{m\in\cM_\text{NA}}\sqrt{\beta_m}\cos(\bar{\phi}_m - \theta_m) \r)^2 + \l(\sum_{m\in\cM_\text{NA}}\sqrt{\beta_m}\sin(\bar{\phi}_m - \theta_m) \r)^2,
\end{equation*}
and 
\begin{equation*}
\varphi_\text{A}(\psi) \triangleq \arctan\l(\frac{\sum_{m\in\cM_\text{A}}\sqrt{\beta_m}\sin(\psi - \theta_m)}{\sum_{m\in\cM_\text{A}}\sqrt{\beta_m}\cos(\psi - \theta_m)} \r),
\end{equation*}
\begin{equation*}
\varphi_\text{NA} \triangleq \arctan\l(\frac{\sum_{m\in\cM_\text{NA}}\sqrt{\beta_m}\sin(\bar{\phi}_m - \theta_m)}{\sum_{m\in\cM_\text{NA}}\sqrt{\beta_m}\cos(\bar{\phi}_m - \theta_m)} \r).
\end{equation*}
Since $\alpha_\text{A}$, $\alpha_\text{NA}$, and $\varphi_\text{NA}$ can be shown to all be invariant over $\psi$, the harvested power in \eqref{Eq:P_r 2} is maximized over $\psi$ if $\varphi_\text{A}(\psi) = -\varphi_\text{NA}$. Let $\psi^*$ denote the optimal common phase of the adapting ETs to maximize $Q(\psi)$, i.e., $\varphi_\text{A}(\psi^*) = -\varphi_\text{NA}$. It can be shown that $\psi^*$ is given by
\begin{equation} \label{Eq:TargetPhase}
\psi^* = \arctan\l(\frac{(1-\tan(\varphi_\text{NA}))\sum_{m\in\cM_\text{A}}\sqrt{\beta_m}\cos(\theta_m)}{(1+\tan(\varphi_\text{NA}))\sum_{m\in\cM_\text{A}}\sqrt{\beta_m}\sin(\theta_m)}\r).
\end{equation}

The condition $\varphi_\text{A}(\psi^*) = -\varphi_\text{NA}$ to maximize $Q(\psi)$ over $\psi$ has the following intuitive explanation. The received sum-signal at the ER, i.e., $r(t)$ given in \eqref{Eq:RxSignal4}, consists of two aggregated waveforms, where one is the sum-signal from all ETs in $\cM_\text{A}$ with amplitude $\sqrt{\alpha_\text{A}}$ and phase $\varphi_\text{A}(\psi)$, and the other is the sum-signal from all ETs in $\cM_\text{NA}$ with amplitude $\sqrt{\alpha_\text{NA}}$ and phase $-\varphi_\text{NA}$. Thus, the condition $\varphi_\text{A}(\psi^*) = -\varphi_\text{NA}$ implies that the sum-signal from all adapting ETs with the best common phase $\psi^*$ is optimally aligned with that from all non-adapting ETs, which thus maximizes $Q(\psi)$ over $\psi$.

Next, we present the details of the phase adaptation algorithm for each adapting ET (applicable to both sequential and parallel trainings) for the two cases for whether the ER stores the largest received power value up to the current training time (with memory) or not (without  memory), respectively.

\subsection{Phase Adaptation Algorithm Without Memory} \label{Subsection:A1}

\begin{figure} 
\centering
\includegraphics[width=12cm]{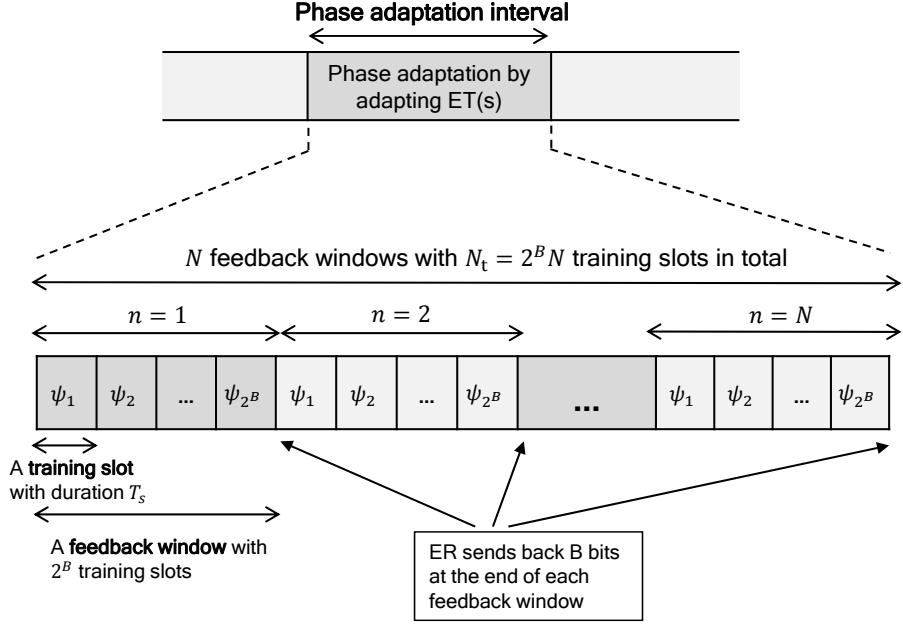}
\caption{Phase adaptation algorithm for each adapting ET.} \label{Fig:TimeSlots}
\end{figure}

We assume each phase adaptation interval consists of $N$ feedback windows of equal duration, as illustrated in Fig.~\ref{Fig:TimeSlots}, where the ER feeds back $B$ bits at the end of each feedback window, with $B\geq 1$. During each feedback window, each adapting ET$_m$, $m\in\cM_\text{A}$, transmits to ER with a sequence of $2^B$ phases, denoted by $\psi_1, \psi_2,...,\psi_{2^B}$. For convenience, we define the set of phase indices as $\cB = \{1,2,...,2^B\}$. It is further assumed that each transmit phase $\psi_b$, $b\in\cB$, is transmitted in a training slot with duration $T_s$, as shown in Fig.~\ref{Fig:TimeSlots}. The ER measures the average power received from the sum-signal of both adapting ETs and non-adapting ETs in each of the $2^B$ phases, denoted by $Q(\psi_b)$, $b\in \cB$, defined in \eqref{Eq:P_r 2}.\footnote{In practice, $T_s$ needs to be sufficiently large so that the ER can accurately measure the harvested power corresponding to each transmit phase. For simplicity, in this paper we assume that the power measurement at ER is perfect in each training slot.} Thus, a feedback window consists of $2^B$ training slots in total and its duration is $2^BT_s$. 

Let $b^*$ denote the index of the transmit phase that results in the maximum received power over $b\in\cB$, i.e., $b^{*} = \arg\max_{b\in\cB} Q(\psi_b)$, which according to \eqref{Eq:P_r 2} is equivalent to
\begin{equation} \label{Eq:Optimal b}
b^* = \arg\max_{b\in\cB} \cos(\varphi_\text{A}(\psi_b) + \varphi_\text{NA}).
\end{equation}
Using \eqref{Eq:Optimal b} and the fact that $\varphi_\text{A}(\psi^*) = -\varphi_\text{NA}$, we can obtain the following lemma. 
\begin{lemma} \label{Statement:ArgMax}
For $b^*$ defined in \eqref{Eq:Optimal b}, it also holds that $b^* = \arg\max_{b\in\cB} \cos(\psi^* - \psi_b)$. 
\end{lemma}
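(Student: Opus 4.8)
The plan is to prove the stronger statement that the two objective functions coincide term by term, i.e. that $\cos(\varphi_\text{A}(\psi_b)+\varphi_\text{NA}) = \cos(\psi^*-\psi_b)$ for every $b\in\cB$; the equality of the two $\arg\max$ sets over $\cB$ is then immediate.

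The crucial ingredient, which I would establish first, is that $\varphi_\text{A}(\cdot)$ is affine in its argument with unit slope. This can be read off directly from the construction: $\sqrt{\alpha_\text{A}}$ and $\varphi_\text{A}(\psi)$ are the amplitude and phase of the phasor obtained by superposing the unit-frequency sinusoids $\{\sqrt{\beta_m}\cos(2\pi f_c t + \psi - \theta_m)\}_{m\in\cM_\text{A}}$ (this is exactly the computation taking \eqref{Eq:RxSignal4} into \eqref{Eq:P_r 2}). Replacing $\psi$ by $\psi+\delta$ adds $\delta$ to the argument of every summand, hence rotates the resultant phasor by $\delta$ while leaving its amplitude $\sqrt{\alpha_\text{A}}$ unchanged; therefore $\varphi_\text{A}(\psi+\delta) = \varphi_\text{A}(\psi) + \delta \pmod{2\pi}$. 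Equivalently, writing the phasor as $e^{\jmath\psi}\sum_{m\in\cM_\text{A}}\sqrt{\beta_m}e^{-\jmath\theta_m}$ makes the additive $\psi$-dependence explicit. Taking $\delta=\psi$ and setting $\gamma\triangleq\varphi_\text{A}(0)$ yields $\varphi_\text{A}(\psi) = \psi + \gamma \pmod{2\pi}$, with $\gamma$ a constant independent of $\psi$ (the degenerate case $\alpha_\text{A}=0$, in which the adapting ETs contribute nothing and the phase-adaptation interval is vacuous, is excluded).

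Substituting this into \eqref{Eq:Optimal b} gives $\cos(\varphi_\text{A}(\psi_b)+\varphi_\text{NA}) = \cos(\psi_b+\gamma+\varphi_\text{NA})$. The defining property of the target phase, $\varphi_\text{A}(\psi^*)=-\varphi_\text{NA}$, becomes $\psi^*+\gamma \equiv -\varphi_\text{NA} \pmod{2\pi}$ under the same representation, so $\gamma+\varphi_\text{NA}\equiv-\psi^*\pmod{2\pi}$, and hence $\cos(\varphi_\text{A}(\psi_b)+\varphi_\text{NA}) = \cos(\psi_b-\psi^*) = \cos(\psi^*-\psi_b)$ for all $b\in\cB$. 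The lemma follows at once.

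The one point that needs care — and essentially the only obstacle, since no estimation is involved — is the modulo-$2\pi$ bookkeeping: the $\arctan$ appearing in the definitions of $\varphi_\text{A}$ and $\varphi_\text{NA}$ must be interpreted as the full two-argument argument of the underlying phasor (the convention already implicit in deriving \eqref{Eq:P_r 2}), so that the identity $\varphi_\text{A}(\psi)=\psi+\gamma$ holds modulo $2\pi$ rather than only modulo $\pi$. Because $\cos$ is $2\pi$-periodic, every equality above is then exact.
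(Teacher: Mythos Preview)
Your proof is correct and follows essentially the same route the paper indicates: the paper simply asserts that the lemma follows ``using \eqref{Eq:Optimal b} and the fact that $\varphi_\text{A}(\psi^*) = -\varphi_\text{NA}$'' without spelling out any details, and your argument makes explicit the one ingredient those two facts need to be combined --- namely that $\varphi_\text{A}(\psi)=\psi+\gamma$ for a $\psi$-independent constant $\gamma$ --- after which the two objective functions coincide termwise. Your remark about interpreting the $\arctan$ as the two-argument phase is exactly the hidden assumption the paper is relying on.
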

Lemma~\ref{Statement:ArgMax} implies that the transmit phase $\psi_{b^*}$ that results in the largest received power over $\cB$ should be closest to the target optimal phase $\psi^*$. At the end of each feedback window, the ER feeds back $B$ bits indicating $b^*$ (among $2^B$ phase indices) to each adapting ET$_m$, $m\in\cM_\text{A}$, as shown in Fig.~\ref{Fig:TimeSlots}. Let $\cA^{(n)}\subseteq [-\pi,\pi)$ denote the working set of each adapting ET$_m$, $m\in\cM_\text{A}$, for finding $\psi^*$ at the beginning of the $n$th feedback window, where $\psi^*\in \cA^{(n)}$. It is assumed that $\cA^{(1)} = [-\pi, \pi)$ and the transmit phases for the first $(n=1)$ feedback window are equally separated as given by $\psi_b = -\pi + \frac{\pi}{2^{B-1}}\l(b - \frac{1}{2}\r)$, $b\in \cB$, as illustrated in Fig.~\ref{Fig:Without_n1} for the case of $B=2$. With $b^{*}$ obtained from the ER feedback at the end of the $n$th feedback window, each adapting ET$_m$, $m\in\cM_\text{A}$, can infer that the target phase $\psi^*$ is not located in $\bigcup_{b\in\cB\backslash\{b^{*}\}}\{\theta\in [-\pi,\pi): \cos(\theta - \psi_{b^{*}}) < \cos(\theta - \psi_b)\}$, according to Lemma~\ref{Statement:ArgMax}. Consequently, at the end of the $n$th feedback window, each adapting ET$_m$, $m\in\cM_\text{A}$, updates its working set for the $(n+1)$th feedback window as 
\begin{equation} \label{Eq:WorkingSetWithout}
\cA^{(n+1)} =  \cA^{(n)}\backslash \bigcup_{b\in\cB\backslash\{b^{*}\}}\{\theta\in [-\pi,\pi): \cos(\theta - \psi_{b^{*}}) < \cos(\theta - \psi_b)\}.  
\end{equation}
Define $x = \inf_{\theta\in\cA^{(n+1)}}\theta$ and $y = \sup_{\theta\in\cA^{(n+1)}}\theta$, with $-\pi\leq x \leq y < \pi$. Finally, each adapting ET$_m$, $m\in\cM_\text{A}$, updates the transmit phases for the next window as
\begin{equation} \label{Eq:PhaseWithout}
\psi_b = x + \frac{y - x}{2^{B+1}} + \frac{y- x}{2^B}(b-1), \quad b\in\cB,
\end{equation}
by equally dividing the set $\cA^{(n+1)}$ into $2^B$ subsets, as illustrated in Figs.~\ref{Fig:Without_n2} and \ref{Fig:Without_n3} for the cases of $\cA^{(2)}$ and $\cA^{(3)}$, respectively, with $B=2$. The above procedure is repeated for $N$ feedback windows, and at the end of the $N$th window each adapting ET$_m$, $m\in\cM_\text{A}$, sets $\bar{\phi}_m = \psi_{b^*}$. The above phase adaptation algorithm is summarized in \textbf{Algorithm 1} (A1). For convenience, we also refer to this algorithm as ``$B$-bit feedback without memory".

It is worth noting that the set of transmit phases $\psi_1,...,\psi_{2^B}$ at the end of the $n$th feedback window equally divides the set $\cA^{(n)}$ into $2^B$ subsets, and the subset corresponding to $\psi_{b^*}$ will become $\cA^{(n+1)}$ in the subsequent window; as a result, the size of the working set $\cA^{(n+1)}$ becomes $1/2^B$ of that of $\cA^{(n)}$ (see Fig.~\ref{Fig:Algorithm_Without}), which exponentially decreases over $N$. The convergence performance of (A1) will be analyzed in more detail in Section~\ref{Section:Performance}.

\begin{figure}
\centering
\subfigure[$n=1$]{
\centering
\includegraphics[width=6cm]{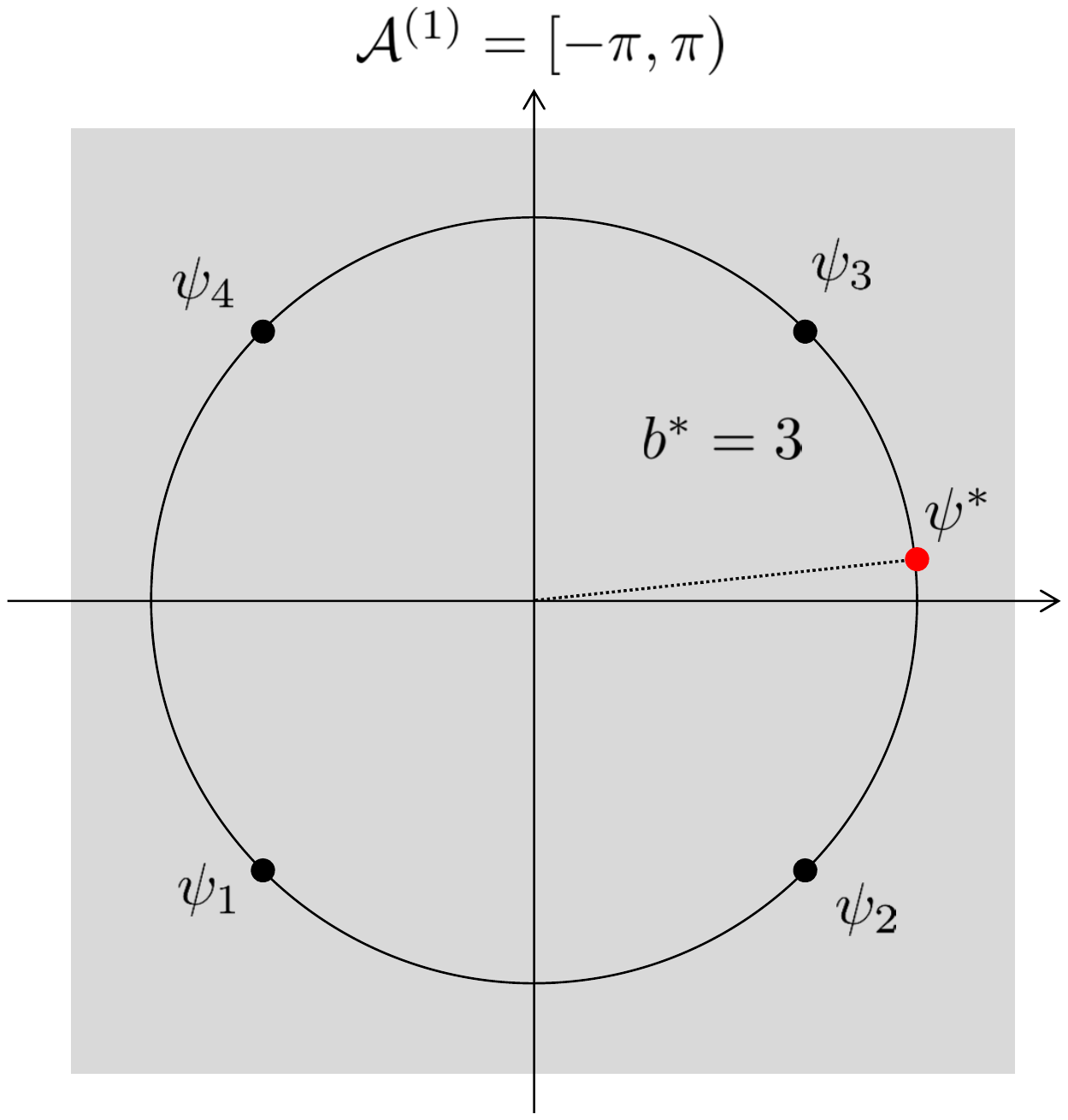}\label{Fig:Without_n1}} 
\subfigure[$n=2$]{
\centering
\includegraphics[width=6cm]{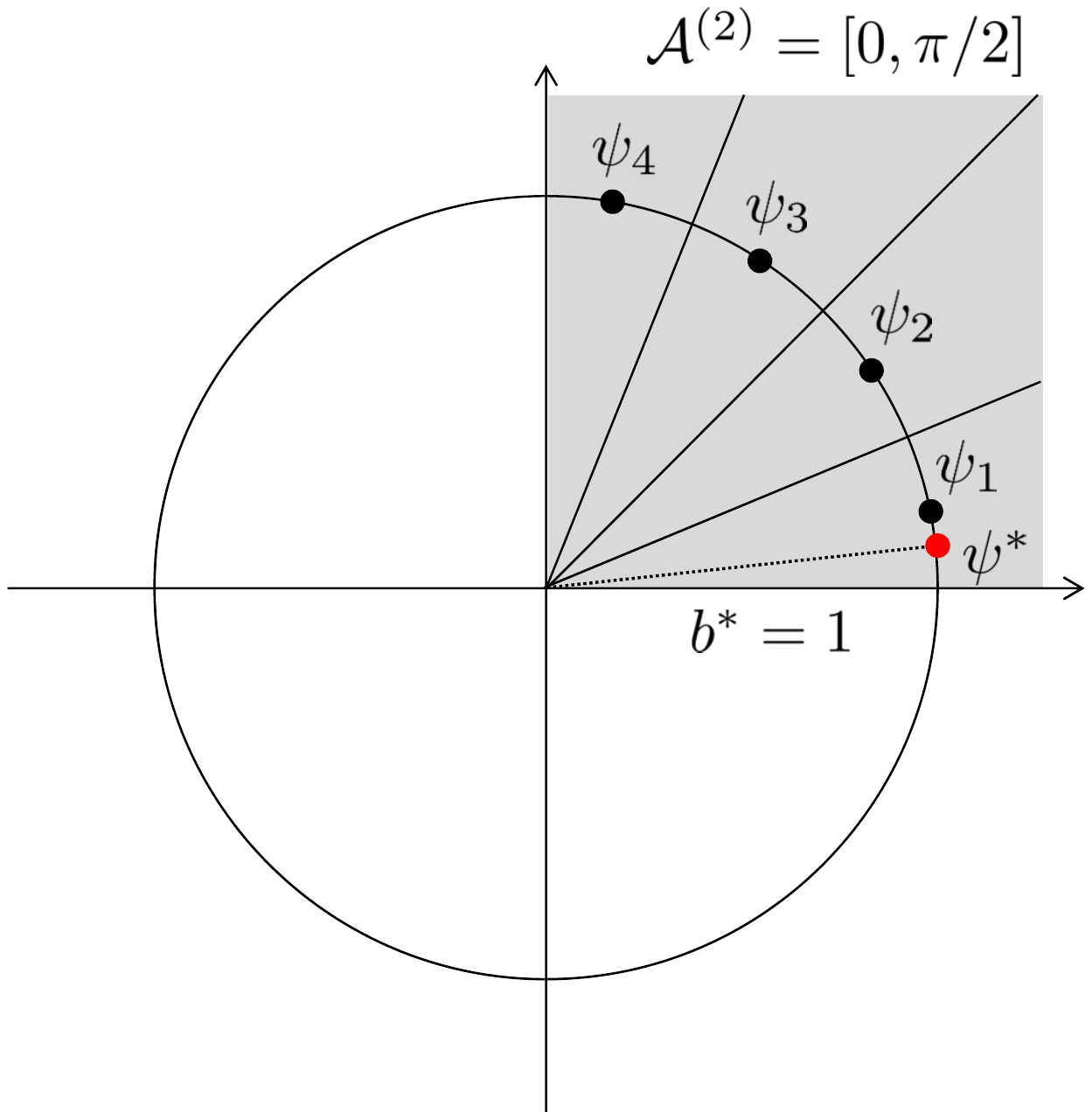}\label{Fig:Without_n2}} 
\subfigure[$n=3$]{
\centering
\includegraphics[width=6cm]{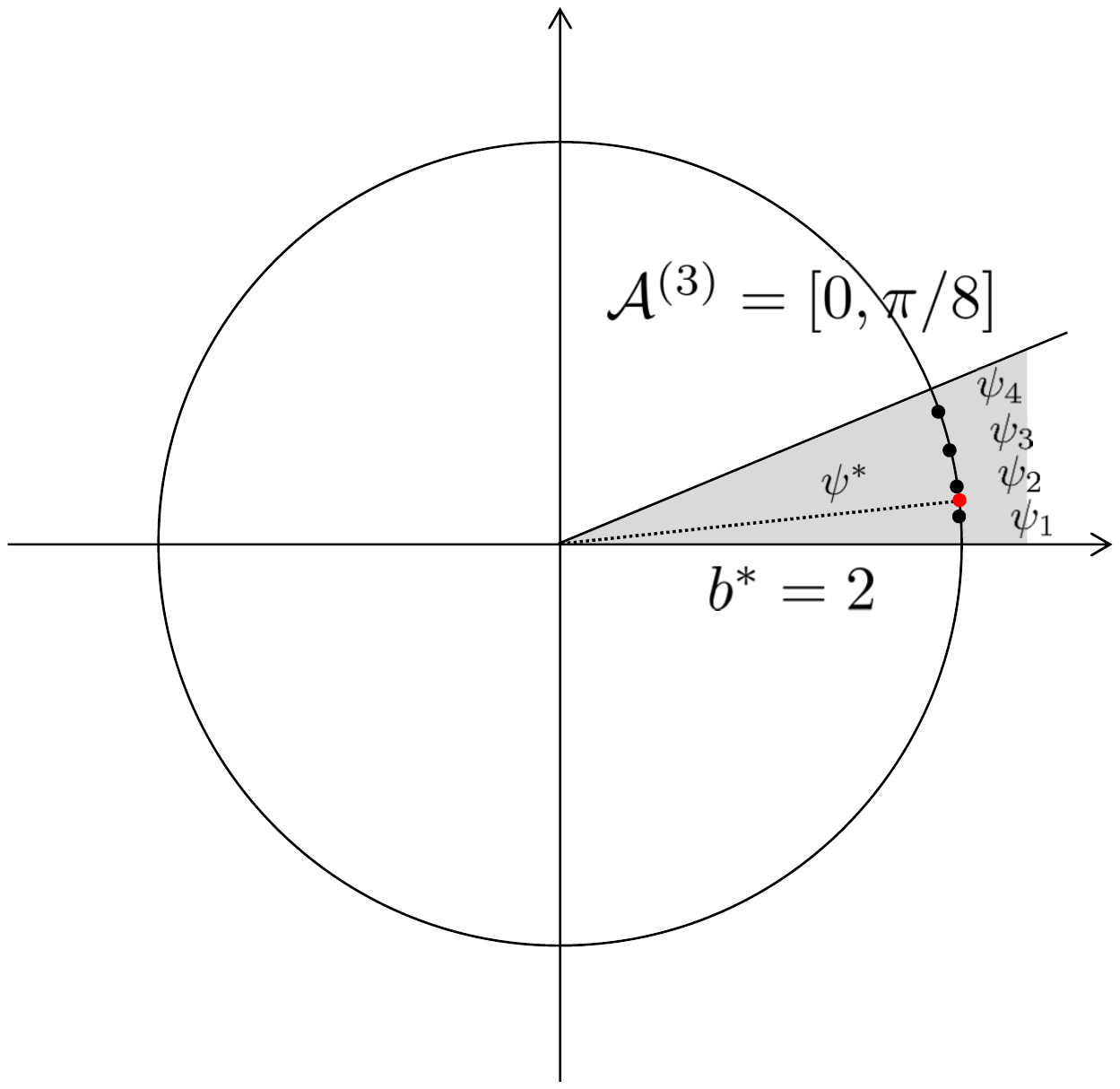}\label{Fig:Without_n3}} 
\caption{Illustration of (A1) for locating the target phase $\psi^*$, for $n=1,2,3$ with $B=2$. The shaded regions indicate the working set $\cA^{(n)}$ at the beginning of the $n$th feedback window.}
\label{Fig:Algorithm_Without}
\end{figure}

\begin{algorithm}
\caption{(A1): $B$-bit feedback without memory}
\begin{algorithmic}[1]
\State \textbf{Initialize:} Each ET$_m$, $m\in\cM_\text{A}$, sets $\cA^{(1)} = [-\pi, \pi)$, and $\psi_b = -\pi + \frac{2\pi}{2^{B+1}} + \frac{2\pi}{2^B}(b-1)$, $b\in\cB$.
\For{ $n=1 : N$}
\State Each ET$_m$, $m\in\cM_\text{A}$, sequentially transmits with $\phi_m = \psi_b$,  $b\in\cB$.
\State ER computes $b^* = \arg\max_{b\in\cB} Q_m(\psi_b)$, and feeds back its $B$-bit index to each ET$_m$, $m\in\cM_\text{A}$.
\State Each ET$_m$, $m\in\cM_\text{A}$, sets $\cA^{(n+1)} = \cA^{(n)}\backslash \bigcup_{b\in\cB\backslash\{b^*\}}\{\theta\in [-\pi,\pi): \cos(\theta - \psi_{b^*}) < \cos(\theta - \psi_b)\}$.
\State Each ET$_m$, $m\in\cM_\text{A}$, sets $x = \inf_{\theta\in\cA^{(n+1)}}\theta$ and $y = \sup_{\theta\in\cA^{(n+1)}}\theta$ 
\State Each ET$_m$, $m\in\cM_\text{A}$, sets $\psi_b$'s as in \eqref{Eq:PhaseWithout}, $b\in\cB$.
\EndFor
\State Each ET$_m$, $m\in\cM_\text{A}$, sets $\bar{\phi}_m = \psi_{b^*}$.
\end{algorithmic}
\end{algorithm}

\subsection{Phase Adaptation Algorithm With Memory} \label{Subsection:A2}
In this subsection, we extend the phase adaptation algorithm in Section~\ref{Subsection:A1} by considering the case that the ER has memory, i.e., it stores and updates the maximum harvested power up to each feedback window.
 
\begin{figure} 
\centering
\includegraphics[width=6cm]{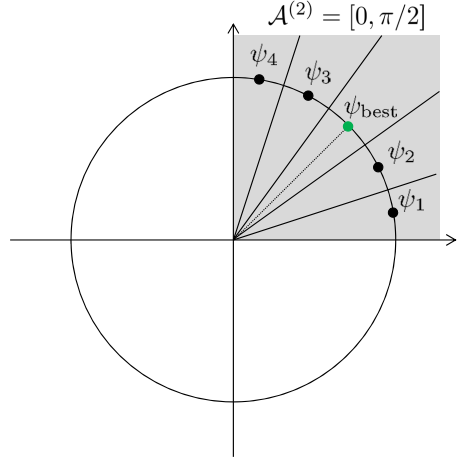}
\caption{Illustration of the proposed phase adaptation algorithm with memory (A2) for $n=2$ with $B=2$, while the working set and the transmit phases for $n=1$ are the same as the case without memory shown in Fig.~\ref{Fig:Without_n1} .} \label{Fig:With_n12}
\end{figure}

Specifically, at the first feedback window ($n=1$), each adapting ET$_m$, $m\in\cM_\text{A}$,   transmits to the ER with a sequence of $2^B$ equally-spaced phases $\psi_1, \psi_2,...,\psi_{2^B}$, each with duration $T_s$, similar to (A1) as shown in Fig.~\ref{Fig:Without_n1} with $B=2$. Then, at the end of this window, the ER feeds back $B$ bits indicating $b^* = \arg\max_{b\in\cB} Q(\psi_b)$ to each adapting ET$_m$, $m\in\cM_\text{A}$, or equivalently $b^* = \arg\max_{b\in\cB} \cos(\psi^* - \psi_b)$ according to Lemma~\ref{Statement:ArgMax}. Unlike (A1), the value of $Q(\psi_{b^*})$ is stored in the memory of the ER as the maximum harvested power so far, denoted by $Q_\text{best}$, i.e., $Q_\text{best} = Q(\psi_{b^*})$. After receiving $b^*$, each adapting ET$_m$ updates its working set to be $\cA^{(2)} = \cA^{(1)}\backslash \bigcup_{b\in\cB\backslash\{b^*\}}\{\theta\in [-\pi,\pi): \cos(\theta - \psi_{b^*}) < \cos(\theta - \psi_b)\}$, same as (A1); while it also denotes the best transmit phase so far as $\psi_\text{best} = \psi_{b^*}$. Let $x = \inf_{\theta\in\cA^{(n+1)}}\theta$ and $y = \sup_{\theta\in\cA^{(n+1)}}\theta$. By exploiting the fact that the ER knows the value of $Q(\psi_{b^*})$ which is recorded as $Q_\text{best}$, at the end of the first feedback window, each adapting ET$_m$, $m\in\cM_\text{A}$, updates the transmit phases for the next window ($n=2$) as
\begin{equation} \label{Eq:PhaseCase1}
\psi_b = \l\{\begin{aligned}
&x + \frac{y - x}{2(2^{B}+1)} + \frac{y - x}{2^B+1}(b-1), \quad \text{if } b=1,...,2^{B-1}, \\
&x + \frac{y - x}{2(2^{B}+1)} + \frac{y - x}{2^B+1}b, \;\quad\qquad \text{if } b=2^{B-1}+1,...,2^B,
\end{aligned}
\r.
\end{equation}
which is illustrated in Fig.~\ref{Fig:With_n12} for the case of $B=2$. Note that the set of phases $\psi_1,...,\psi_{2^B}$ in \eqref{Eq:PhaseCase1} together with $\psi_\text{best}$ equally divide $\cA^{(2)}$ into $2^B +1$ subsets. This is more efficient compared to the previous case without memory where only $2^B$ subsets are obtained for $n=2$ (see Fig.~\ref{Fig:Without_n2}).

\begin{figure}
\centering
\subfigure[]{
\centering
\includegraphics[width=5cm]{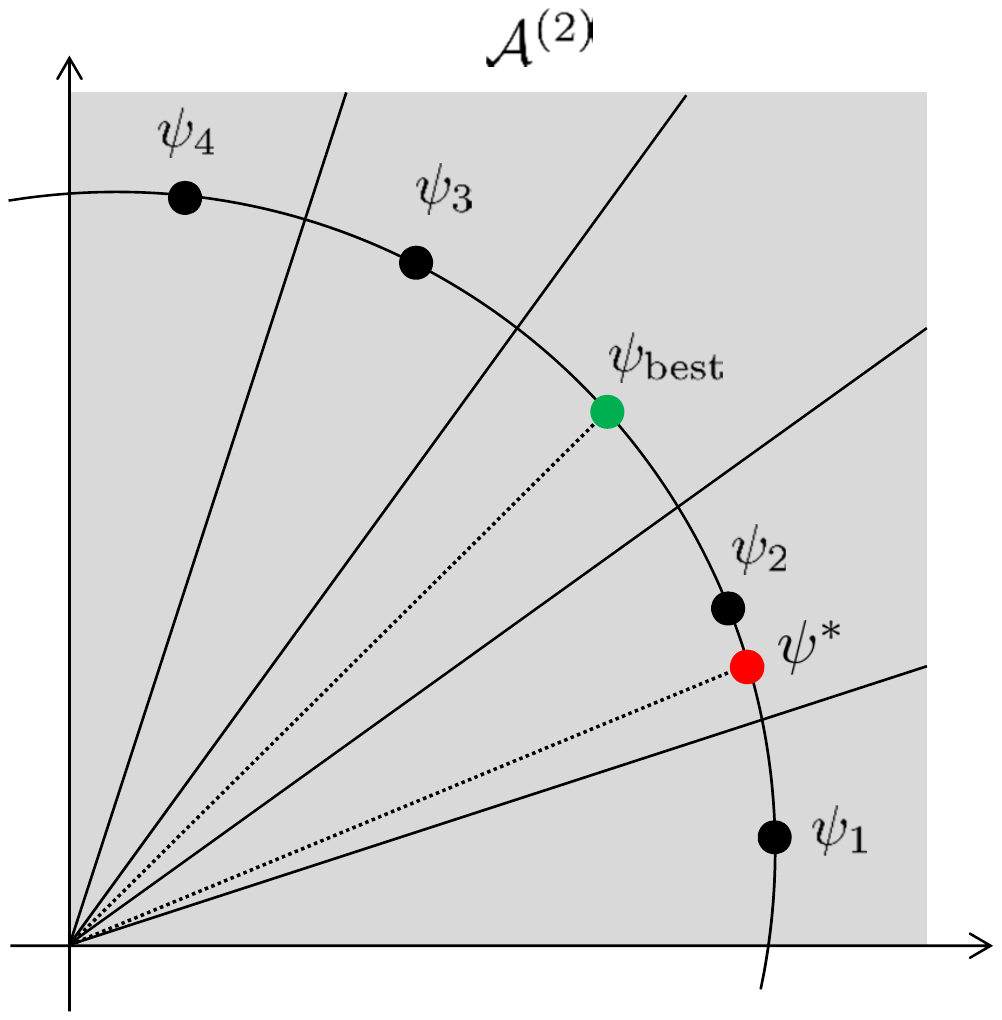}\label{Fig:Case1_1}} 
\subfigure[]{
\centering
\includegraphics[width=5cm]{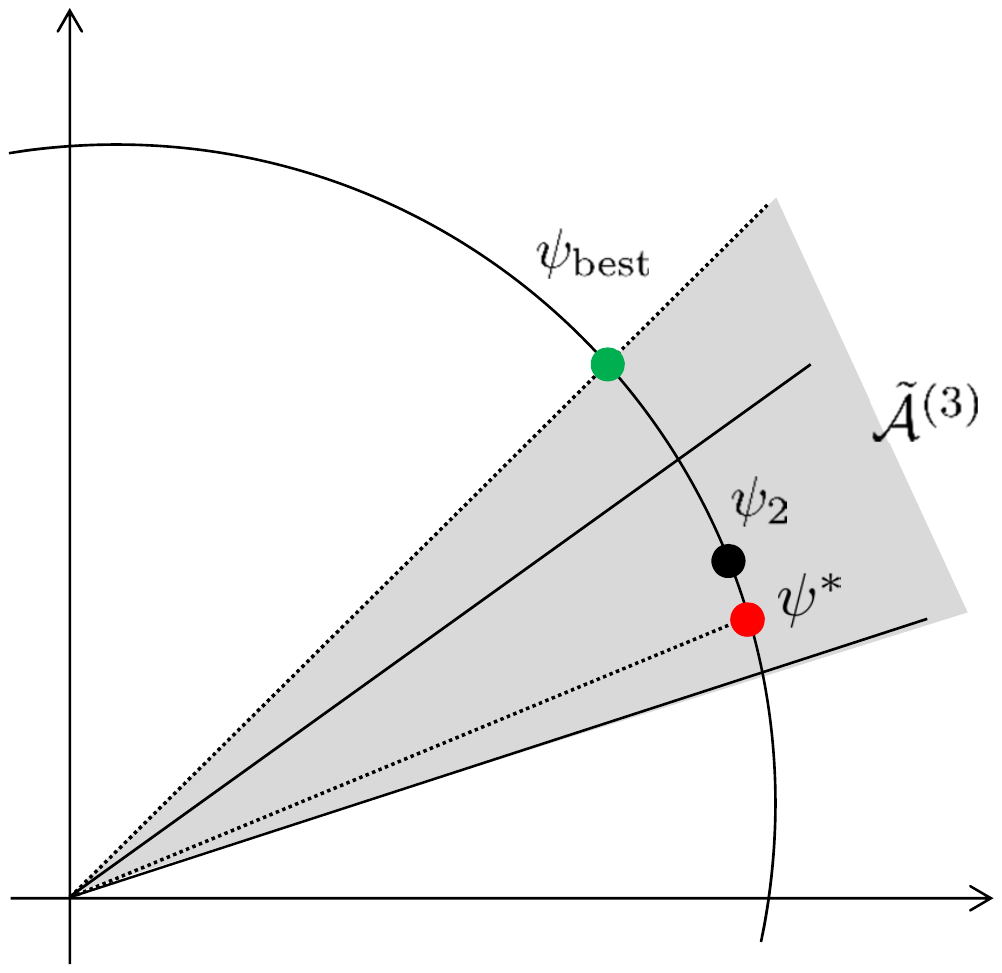}\label{Fig:Case1_2}} 
\subfigure[]{
\centering
\includegraphics[width=10cm]{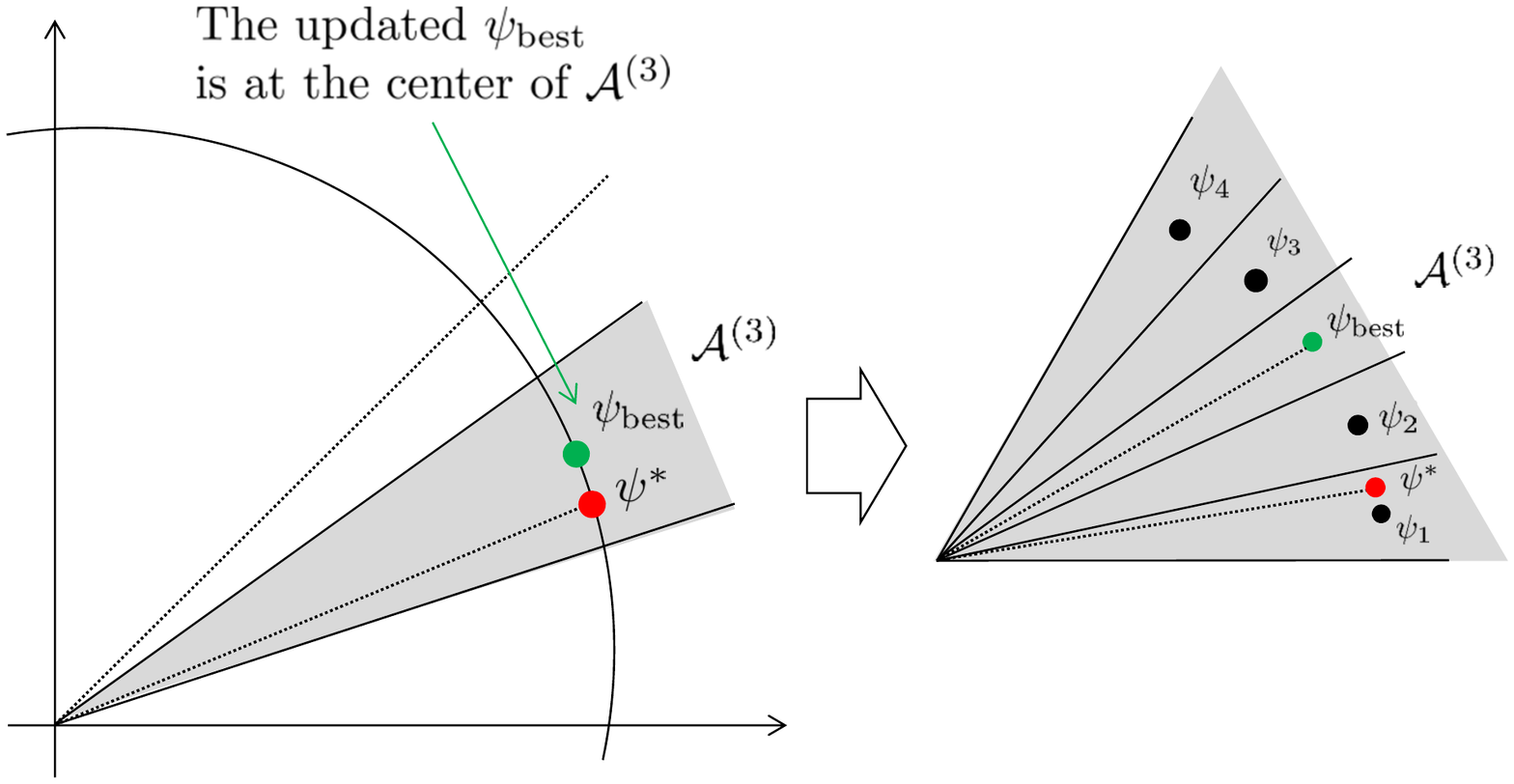}\label{Fig:Case1_3}} 
\caption{Illustration of Case 1 in (A2) for $n=2$. (a) $\cA^{(2)}$; (b) $\tilde{\cA}^{(3)}$; (c) Case 1: $\cA^{(3)}$ with $\psi_\text{best} = \psi_2$, $Q_\text{best} = Q(\psi_2)$, and updated $\psi_b$'s for feedback window $n=3$.}
\label{Fig:Case1}
\end{figure}

\begin{figure}
\centering
\subfigure[]{
\centering
\includegraphics[width=5cm]{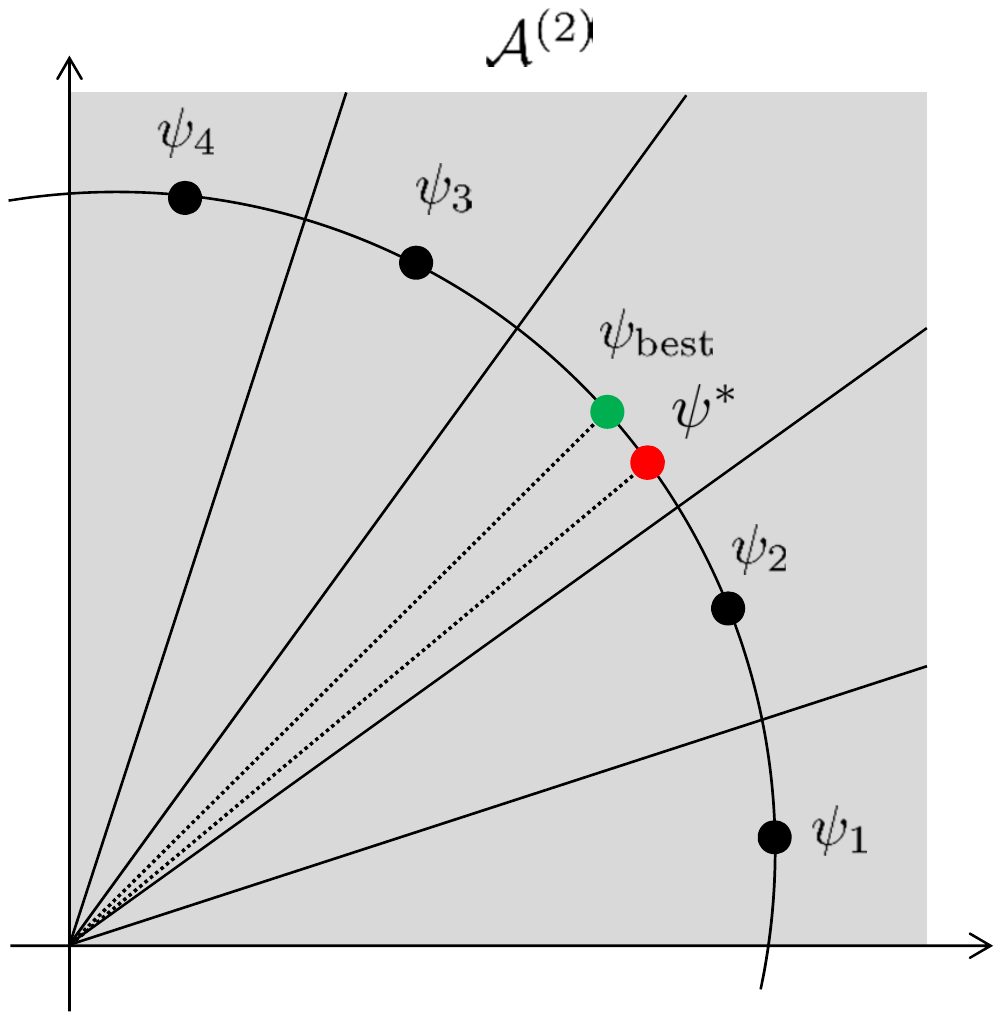}\label{Fig:Case2_1}} 
\subfigure[]{
\centering
\includegraphics[width=5cm]{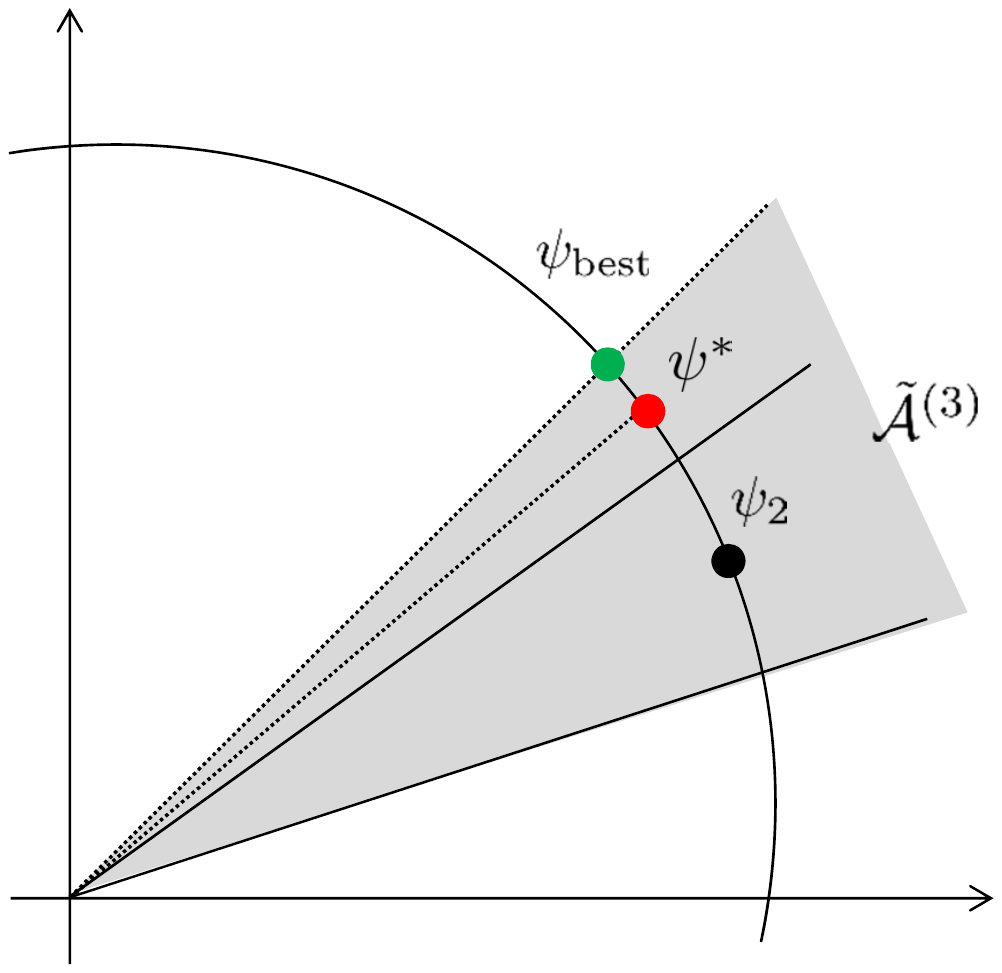}\label{Fig:Case2_2}} 
\subfigure[]{
\centering
\includegraphics[width=10cm]{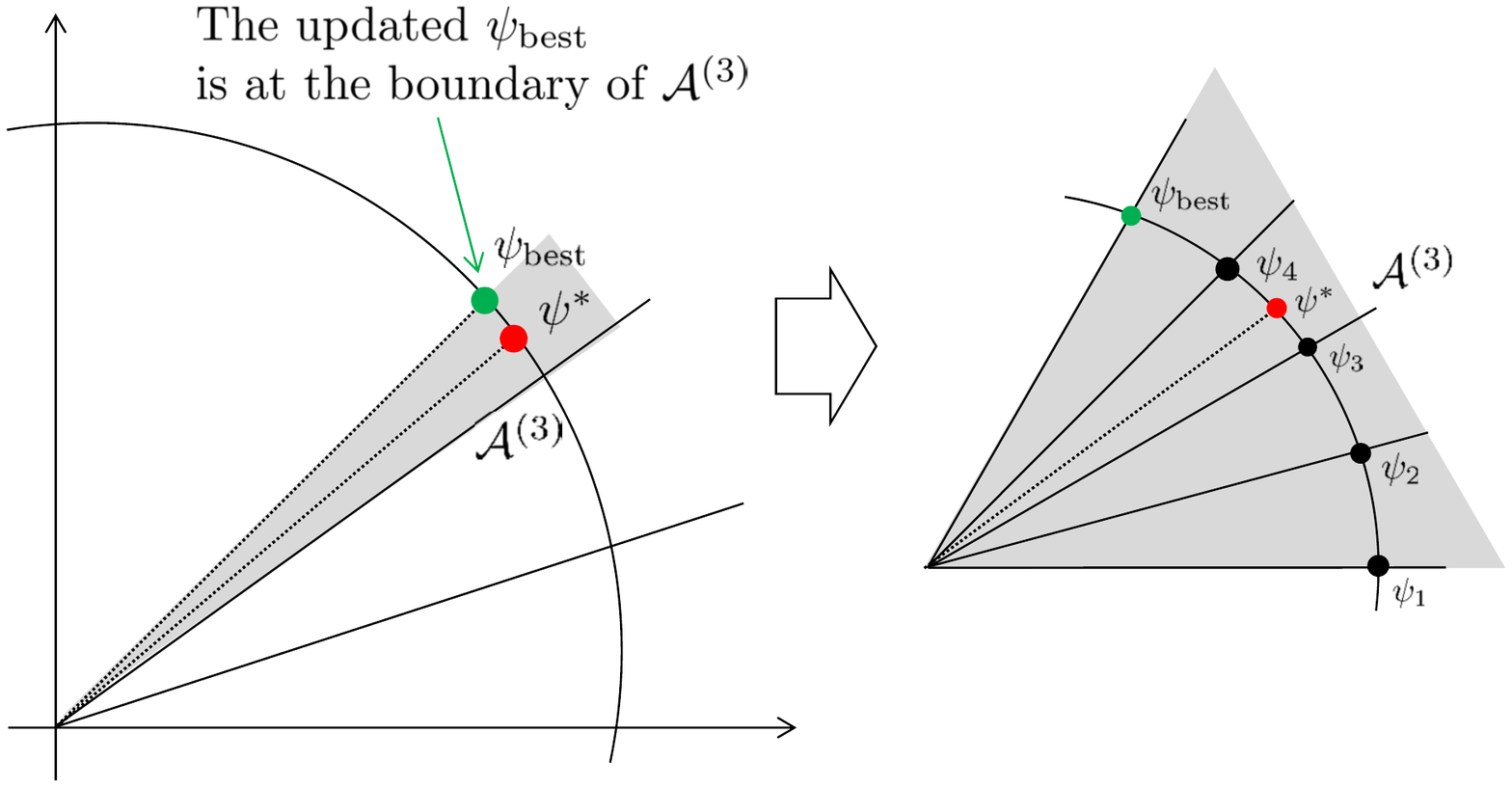}\label{Fig:Case2_3}} 
\caption{Illustration of Case 2 in (A2) for $n=2$. (a) $\cA^{(2)}$; (b) $\tilde{\cA}^{(3)}$; (c) Case 2: $\cA^{(3)}$ with $\psi_\text{best}$ and $Q_\text{best}$ unchanged, and updated $\psi_b$'s for feedback window $n=3$.}
\label{Fig:Case2}
\end{figure}

From the second feedback window ($n=2$), each adapting ET$_m$, $m\in\cM_\text{A}$, transmits with the sequence of $2^B$ phases given in \eqref{Eq:PhaseCase1}, and then the ER feeds back $B$ bits indicating $b^* = \arg\max_{b\in\cB} \cos(\psi^* - \psi_b)$ to each adapting ET$_m$, $m\in\cM_\text{A}$, same as (A1) for the case without memory. However, in addition to the $B$-bit feedback, from the window $n=2$, the ER compares the maximum harvested power in its memory, i.e., $Q_\text{best}$, with the largest harvested power in the current window, i.e., $Q(\psi_{b^*})$, and feeds back one more bit to each adapting ET$_m$ to help further reduce its working set. Specifically, by using the first $B$-bit feedback indicating $b^*$, each adapting ET$_m$  sets
\begin{equation} \label{Eq:B-bits}
\tilde{\cA}^{(n+1)} = \cA^{(n)}\backslash \bigcup_{b\in\cB\backslash\{b^*\}}\{\theta\in [-\pi,\pi): \cos(\theta - \psi_{b^*}) < \cos(\theta - \psi_b)\}.
\end{equation}
Next, by utilizing the additional one-bit feedback indicating either $Q(\psi_{b^*}) \geq Q_\text{best}$ or $Q_m(\psi_{b^*}) < Q_\text{best}$, each adapting ET$_m$ further reduces its working set as follows. If $Q(\psi_{b^*}) \geq Q_\text{best}$, referred to as Case 1, ET$_m$ sets
\begin{equation} \label{Eq:Another One-bit1}
\cA^{(n+1)} = \tilde{\cA}^{(n+1)}\backslash \{\theta\in [-\pi,\pi): \cos(\theta - \psi_{b^*}) < \cos(\theta - \psi_\text{best})\},
\end{equation}
and ET$_m$ and the ER update $\psi_\text{best} = \psi_{b^*}$ and $Q_\text{best} = Q(\psi_{b^*})$, respectively, since $\psi_{b^*}$ now becomes the best phase. On the other hand, if $Q(\psi_{b^*}) < Q_\text{best}$, referred to as Case 2, ET$_m$ sets
\begin{equation} \label{Eq:Another One-bit2}
\cA^{(n+1)} = \tilde{\cA}^{(n+1)}\backslash \{\theta\in [-\pi,\pi): \cos(\theta - \psi_{b^*}) > \cos(\theta - \psi_\text{best})\},
\end{equation}
while both $\psi_\text{best}$ and $Q_\text{best}$ remain unchanged. The above two cases are shown in Figs.~\ref{Fig:Case1_3} and ~\ref{Fig:Case2_3}, respectively, for $\cA^{(3)}$. Further, the transmit phases for the next window are determined for the two cases as follows.

\begin{itemize}
\item Case 1: In this case, the updated $\psi_\text{best}$ is located at the center of $\cA^{(n+1)}$, i.e., $\psi_\text{best} = (x + y)/2$, as illustrated in Fig.~\ref{Fig:Case1_3}. As a result, each adapting ET$_m$, $m\in\cM_\text{A}$, sets $\psi_b$'s according to \eqref{Eq:PhaseCase1}, as shown in Fig.~\ref{Fig:Case1_3}.
 
\item Case 2: In this case, the updated $\psi_\text{best}$ is located at one of the two boundary points of $\cA^{(n+1)}$, i.e., either $\psi_\text{best} = x$ or $\psi_\text{best} = y$, as illustrated in Fig.~\ref{Fig:Case2_3}. Thus, each adapting ET$_m$, $m\in\cM_\text{A}$, sets $\psi_b$, $b\in\cB$, as 
\begin{equation} \label{Eq:PhaseCase2}
\psi_b = \l\{\begin{aligned}
& x + \frac{y - x}{2^{B}}b,  &\quad \mbox{if }\psi_\text{best} = x, \\
& x + \frac{y - x}{2^{B}}(b-1), &\quad \mbox{if }\psi_\text{best} = y,
\end{aligned}
\r.
\quad 
\end{equation}
as illustrated in Fig.~\ref{Fig:Case2_3}.
\end{itemize}

It can be verified that in the subsequent feedback windows ($n\geq 3$), either Case 1 or Case 2 may occur. Thus, the above procedure is repeated for $N$ feedback windows, and finally each adapting ET$_m$, $m\in\cM_\text{A}$, sets $\bar{\phi}_m = \psi_\text{best}$. The above phase adaptation algorithm is summarized in \textbf{Algorithm 2} (A2). Since this algorithm utilizes the memory at the ER and one additional feedback bit comparing $Q_\text{best}$ with $Q(\psi_{b^*})$, it is also referred to as ``$(B+1)$-bit feedback with memory". 

\begin{algorithm}
\caption{(A2): $(B+1)$-bit feedback with memory}
\begin{algorithmic}[1]
\State \textbf{Initialize:} Each ET$_m$, $m\in\cM_\text{A}$, sets $\cA^{(1)} = [-\pi, \pi)$, $\psi_b = -\pi + \frac{2\pi}{2^{B+1}} + \frac{2\pi}{2^B}(b-1)$, $b\in\cB$, $\psi_\text{best} = \psi_1$. ER sets $Q_\text{best} = 0$.
\For{ $n=1 : N$}
\State Each ET$_m$, $m\in\cM_\text{A}$, sequentially transmits with $\phi_m = \psi_b$,  $b\in\cB$.
\State ER computes $b^* = \arg\max_{b\in\cB} Q_m(\psi_b)$, and feeds back its $B$-bit index to each ET$_m$, $m\in\cM_\text{A}$.

\State Each ET$_m$, $m\in\cM_\text{A}$, first sets $\tilde{\cA}^{(n+1)}$ as in \eqref{Eq:B-bits}.
\State ER also compares $Q_m(\psi_{b^*})$ and $Q_\text{best}$, and feeds back the corresponding one-bit to each ET$_m$, $m\in\cM_\text{A}$.
\If{ $Q_m(\psi_{b^*}) \geq Q_\text{best}$}
\State Each ET$_m$, $m\in\cM_\text{A}$, sets $\cA^{(n+1)}$ as in \eqref{Eq:Another One-bit1}, and $\psi_\text{best} = \psi_{b^*}$.
\State ER sets $Q_\text{best} = Q_m(\psi_{b^*})$.
\Else
\State Each ET$_m$, $m\in\cM_\text{A}$, sets $\cA^{(n+1)}$ as in \eqref{Eq:Another One-bit2}.
\EndIf
\State Each ET$_m$, $m\in\cM_\text{A}$, sets $x = \inf_{\theta\in\cA^{(n+1)}}\theta$ and $y = \sup_{\theta\in\cA^{(n+1)}}\theta$.
\If{ $\psi_\text{best} = \frac{x+y}{2}$ (Case 1)}
\State Each ET$_m$, $m\in\cM_\text{A}$, sets $\psi_b$'s as in \eqref{Eq:PhaseCase1}, $b\in\cB$.
\Else{ (Case 2)}
\State Each ET$_m$, $m\in\cM_\text{A}$, sets $\psi_b$'s as in \eqref{Eq:PhaseCase2}, $b\in\cB$.
\EndIf
\EndFor
\State Each ET$_m$, $m\in\cM_\text{A}$, sets $\bar{\phi}_m = \psi_\text{best}$.
\end{algorithmic}
\end{algorithm}

\section{Performance Analysis} \label{Section:Performance}
In this section, we first analyze and compare the convergence performances of the two phase adaptation algorithms (A1) without ER memory and (A2) with ER memory presented in Section~\ref{Section:PhaseAdaptation}, based on which the efficiency of the distributed EB protocol with sequential training is then characterized.

\subsection{Convergence Performance of (A1) and (A2)} \label{Subsection:Performance_Algorithm}
The objective of this subsection is to investigate how accurately the proposed phase adaptation algorithms can estimate the target phase $\psi^*$ after each phase adaptation interval, given $N_\text{t}$ training slots per phase adaptation interval (see Fig.~\ref{Fig:TimeSlots}). To analyze the performance for (A1), i.e., the algorithm of $B$-bit feedback without memory, as well as (A2), i.e., the algorithm of $(B+1)$-bit feedback with memory, we define the phase-error between the final estimated phase by (A1) or (A2) at each adapting ET$_m$, $m\in\cM_\text{A}$, i.e., $\bar{\phi}_m$, and the optimal phase $\psi^*$ as
\begin{equation} \label{Eq:Error}
e = \bar{\phi}_m - \psi^* .
\end{equation}

\subsubsection{Phase-Error Upper Bound for (A1)}
In (A1), as mentioned in Section~\ref{Subsection:A1}, the size of the working set $\cA^{(n)}$ is reduced by $1/2^B$ per feedback window (see Fig.~\ref{Fig:Algorithm_Without}). Moreover, since each adapting ET$_m$, $m\in \cM_\text{A}$, finally sets $\bar{\phi}_m  =\psi_{b^*}$, given $N_\text{t}$ total training slots per phase adaptation interval and $\cA^{(1)} = [-\pi,\pi)$, it can be verified that the absolute value of the phase-error at each ET$_m$ is upper-bounded by
\begin{equation} \label{Eq:ErrorBound}
|e| \leq \pi \l(\frac{1}{2^B}\r)^{N_\text{t}/2^B},
\end{equation}
where we assume that $N = N_\text{t} / 2^B$ is an integer for simplicity. In other words, after $N_\text{t}/2^B$ feedback windows for phase adaptation, the maximum error between the estimated phase and the target phase is upper-bounded by the right-hand side (RHS) of \eqref{Eq:ErrorBound}, which exponentially decreases to zero with increasing $N_\text{t}$. Thus, the estimated phase $\bar{\phi}_m$ of (A1) converges to $\psi^*$ exponentially fast with $N_\text{t}$ given fixed $B$.

Next, we investigate the effect of different values of $B$ on the convergence speed of (A1). It can be easily shown that $B=1$ and $B=2$ yield the same value of the RHS of \eqref{Eq:ErrorBound}, which is further an increasing function over $B\geq 2$ with fixed $N_\text{t}$ (this can be easily verified by taking the logarithm and then the derivative with respect to $B$ in \eqref{Eq:ErrorBound}). Thus, we obtain the following proposition.
\begin{proposition} \label{Statement:ErrorBound_A1}
Given $N_\text{t}\geq 2^B$ total training slots, $B=1$ or $B=2$ yields the minimum phase-error upper bound in (A1). 
\end{proposition}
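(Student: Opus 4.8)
The plan is to treat the right-hand side of \eqref{Eq:ErrorBound} as a function of $B$, with $N_\text{t}$ held fixed, and show it is minimized exactly at $B\in\{1,2\}$. Writing $f(B) = \pi\, 2^{-B N_\text{t}/2^B}$, taking logarithms gives $\ln f(B) = \ln\pi - (\ln 2)\,N_\text{t}\, h(B)$ where $h(B)\triangleq B/2^B$; since $N_\text{t}>0$ is merely a positive multiplicative constant, minimizing $f(B)$ over $B\ge 1$ is equivalent to \emph{maximizing} $h(B)$ over $B\ge 1$. This reduction is the whole content of the statement that one should ``take the logarithm and then the derivative'' hinted at in the text preceding the proposition.

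First I would pin down the behavior of $h$ on the positive integers by a ratio test: $h(B+1)/h(B) = (B+1)/(2B)$, which equals $1$ precisely when $B=1$ and is strictly less than $1$ for every integer $B\ge 2$ (because $B+1<2B$ is equivalent to $B>1$). Hence $h(1)=h(2)=\tfrac12$ and $h(2)>h(3)>h(4)>\cdots$, so $\max_{B\ge 1} h(B)=\tfrac12$, attained only at $B=1$ and $B=2$. Equivalently, one may extend $h$ to a real variable $x$ and observe $h'(x)=2^{-x}(1-x\ln 2)$, which is positive for $x<1/\ln 2\approx 1.44$ and negative afterwards, so $h$ increases on $[1,1/\ln 2]$ and decreases on $[1/\ln 2,\infty)$; comparing the two integers bracketing the turning point, namely $1$ and $2$, yields the same conclusion. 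Plugging back, $f(B)$ attains its minimum value $\pi\,2^{-N_\text{t}/2}=\pi\,(1/\sqrt2)^{N_\text{t}}$ exactly at $B=1$ and $B=2$, which is the claim.

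I do not expect a genuine obstacle here; the argument is a one-line reduction followed by an elementary monotonicity check. The only points requiring care are bookkeeping: one must state explicitly that the comparison is made under a \emph{common} training budget $N_\text{t}$ (so that different $B$ are compared fairly), and one should note that the hypothesis $N_\text{t}\ge 2^B$ is exactly what guarantees $N=N_\text{t}/2^B\ge 1$, i.e., that at least one feedback window completes so that \eqref{Eq:ErrorBound} is non-vacuous; for the optimal values $B\in\{1,2\}$ this merely requires $N_\text{t}\ge 4$. A final remark worth including is that, although $B=1$ and $B=2$ tie in this upper bound, $B=1$ uses fewer feedback bits per window, which could be invoked to break the tie in favor of $B=1$ on overhead grounds.
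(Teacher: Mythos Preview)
Your proposal is correct and follows essentially the same approach as the paper: the text preceding the proposition already sketches the argument as ``take the logarithm and then the derivative with respect to $B$,'' and you carry this out explicitly by reducing to the maximization of $h(B)=B/2^B$. Your ratio test $h(B+1)/h(B)=(B+1)/(2B)$ is a clean discrete alternative to the derivative computation, but the underlying idea is the same.
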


Proposition~\ref{Statement:ErrorBound_A1} suggests that for (A1), using only one-bit or two-bit feedback per feedback window is most efficient in terms of phase estimation accuracy. Without loss of generality, we assume $B=1$ as the most efficient design for (A1) in the sequel.

\subsubsection{Phase-Error Upper Bound for (A2)}
Unlike (A1), two different cases may occur in (A2) depending on the location of $\psi^*$ for feedback windows $n\geq 2$, as described in Section~\ref{Subsection:A2}. As a result, the phase-error in \eqref{Eq:Error} by (A2) can take different values depending on different combinations of Cases 1 and 2 over the $N$ feedback windows. In the following lemma, we analyze the phase-error upper bound for (A2).

\begin{lemma} \label{Statement:ErrorBound_A2_1}
Given $N_\text{t}\geq 2^B$, the phase-error upper bound for (A2) is given by
\begin{equation} \label{Eq:A2_Worst}
|e| \leq  \frac{\pi}{2^B} \l(\frac{1}{2^B +1}\r)^{\frac{N_\text{t}}{2^B}-1}.
\end{equation}
\end{lemma}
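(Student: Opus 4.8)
The plan is to prove \eqref{Eq:A2_Worst} by induction over the feedback windows, tracking both how fast the working set $\cA^{(n)}$ collapses around $\psi^*$ and, crucially, where the current best phase $\psi_\text{best}$ sits inside it. First I would record an invariant valid at the start of every feedback window $n\geq 2$: (i) $\cA^{(n)}$ is an arc of length $d_n$ with $\psi^*\in\cA^{(n)}$ --- this follows from Lemma~\ref{Statement:ArgMax} and the update rules \eqref{Eq:B-bits}, \eqref{Eq:Another One-bit1}, \eqref{Eq:Another One-bit2}, once one observes that on the short arc involved $Q(\psi_a)\geq Q(\psi_b)$ is equivalent to $\cos(\psi^*-\psi_a)\geq\cos(\psi^*-\psi_b)$, exactly as exploited in Lemma~\ref{Statement:ArgMax}; and (ii) $\psi_\text{best}$ lies either at the midpoint of $\cA^{(n)}$ or at one of its two endpoints. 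For the base case, after the first window $\cA^{(2)}$ has length $d_2=2\pi/2^B$ (the $B$-bit reduction of $[-\pi,\pi)$) and $\psi_\text{best}=\psi_{b^*}$ is the midpoint of $\cA^{(2)}$. I would then work with the potential
\[
\Phi_n := \max_{\theta\in\cA^{(n)}}|\theta-\psi_\text{best}|,
\]
which equals $d_n/2$ when $\psi_\text{best}$ is the midpoint of $\cA^{(n)}$ and $d_n$ when it is an endpoint, and which by (i) dominates the current error $|\psi_\text{best}-\psi^*|$; the base case reads $\Phi_2=\pi/2^B$.

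The core step is the one-window estimate $\Phi_{n+1}\leq\frac{1}{2^B+1}\,\Phi_n$ for every $n\geq 2$, established by a four-way case split: $\psi_\text{best}$ at the midpoint versus an endpoint of $\cA^{(n)}$, combined with Case~1 versus Case~2 of the algorithm. The geometry is driven by two facts. When $\psi_\text{best}$ is the midpoint, the $2^B$ probing phases of \eqref{Eq:PhaseCase1} together with $\psi_\text{best}$ are equispaced and partition $\cA^{(n)}$ into $2^B+1$ equal subarcs of length $d_n/(2^B+1)$; hence Case~1 gives $d_{n+1}=d_n/(2^B+1)$ with the updated $\psi_\text{best}=\psi_{b^*}$ again a midpoint, while Case~2 clips off half of a subarc, giving $d_{n+1}=d_n/\big(2(2^B+1)\big)$ with $\psi_\text{best}$ an endpoint. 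When $\psi_\text{best}$ is an endpoint, the phases of \eqref{Eq:PhaseCase2} together with $\psi_\text{best}$ are equispaced with spacing $d_n/2^B$, so Case~1 gives $d_{n+1}=d_n/2^B$ with $\psi_\text{best}$ a midpoint and Case~2 gives $d_{n+1}=d_n/2^{B+1}$ with $\psi_\text{best}$ an endpoint. Substituting into the definition of $\Phi$ yields $\Phi_{n+1}=\frac{\Phi_n}{2^B+1}$ in both midpoint cases, and $\Phi_{n+1}=\frac{d_n}{2^{B+1}}\leq\frac{\Phi_n}{2^B+1}$ in both endpoint cases, using $2^{B+1}\geq 2^B+1$ for $B\geq1$. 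Iterating from $\Phi_2=\pi/2^B$ over the $N=N_\text{t}/2^B$ windows gives $\Phi_{N+1}\leq\frac{\pi}{2^B}\big(\frac{1}{2^B+1}\big)^{N-1}$, and since $\bar{\phi}_m=\psi_\text{best}$ at termination with $\psi^*\in\cA^{(N+1)}$, we obtain $|e|=|\bar{\phi}_m-\psi^*|\leq\Phi_{N+1}$, which is \eqref{Eq:A2_Worst}.

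The main obstacle is the bookkeeping in the case analysis, in particular the boundary probing indices $b^*\in\{1,2^B\}$: there the Voronoi cell of $\psi_{b^*}$ is only a half-cell and may be further trimmed by the extra feedback bit, so the formulas for $d_{n+1}$ and for the new position of $\psi_\text{best}$ must be rechecked separately. A short computation shows that every such sub-case only makes $d_{n+1}$ (hence $\Phi_{n+1}$) smaller, so the one-step estimate still holds; conversely the all--Case~1 path attains equality at every window, which shows \eqref{Eq:A2_Worst} is tight. A secondary point requiring care is to verify that the extra one-bit feedback comparing $Q(\psi_{b^*})$ with $Q_\text{best}=Q(\psi_\text{best})$ is equivalent to the geometric comparison of $\cos(\psi^*-\psi_{b^*})$ and $\cos(\psi^*-\psi_\text{best})$; this follows from \eqref{Eq:P_r 2} together with the monotone dependence of $\cos(\varphi_\text{A}(\psi)+\varphi_\text{NA})$ on $\cos(\psi-\psi^*)$ that already underlies Lemma~\ref{Statement:ArgMax}.
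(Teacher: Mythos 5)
Your argument is correct, and it is built on exactly the same case decomposition as the paper's Appendix~A: the same four combinations of Cases 1 and 2 with the same per-window reduction factors $\frac{1}{2^B+1}$, $\frac{1}{2(2^B+1)}$, $\frac{1}{2^B}$, $\frac{1}{2^{B+1}}$, initialized by the first-window reduction of $[-\pi,\pi)$ to an arc of length $2\pi/2^B$ with $\psi_\text{best}$ at its center. Where you genuinely add value is in how the worst case is concluded. The paper simply asserts that repeating Case~1 gives the largest final working set and then divides its length by two on the grounds that $\psi_\text{best}$ is its center; this leaves implicit the comparison across mixed Case-1/Case-2 paths and, in particular, the fact that in paths ending with Case~2 the error can be the \emph{full} length of $\cA^{(N+1)}$ (since $\psi_\text{best}$ then sits at an endpoint) rather than half of it. Your potential $\Phi_n=\max_{\theta\in\cA^{(n)}}|\theta-\psi_\text{best}|$, together with the invariant $\psi^*\in\cA^{(n)}$ and the position of $\psi_\text{best}$ (midpoint or endpoint), converts the four cases into a single uniform contraction $\Phi_{n+1}\leq \Phi_n/(2^B+1)$ with $\Phi_2=\pi/2^B$, which yields \eqref{Eq:A2_Worst} without having to identify a worst-case path at all, and it correctly notes that the all-Case-1 path attains equality, so the bound is tight. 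Your side remarks are also sound: the boundary indices $b^*\in\{1,2^B\}$ only shrink the cell further, and the extra feedback bit is indeed equivalent to comparing $\cos(\psi^*-\psi_{b^*})$ with $\cos(\psi^*-\psi_\text{best})$, because with a common adapting phase $\psi$ one has $\varphi_\text{A}(\psi)+\varphi_\text{NA}=\psi-\psi^*$, so $Q(\psi)$ in \eqref{Eq:P_r 2} is increasing in $\cos(\psi-\psi^*)$. In short: same skeleton as the paper, but your bookkeeping makes the ``it can be easily shown'' step of the paper's proof fully rigorous.
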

\begin{proof}
See Appendix~\ref{Proof:ErrorBound_A2_1}.
\end{proof}

It is worth noting that the phase-error upper bound given in \eqref{Eq:A2_Worst} for (A2) exponentially decreases with $N_\text{t}$ given fixed $B$, similar to (A1).

The following proposition shows that $B=1$ achieves the best efficiency for (A2).

\begin{proposition}  \label{Statement:ErrorBound_A2_2}
Given $N_\text{t}\geq 2^B$, the phase-error upper bound given in \eqref{Eq:A2_Worst} is an increasing function over $B\geq 1$. 
\end{proposition}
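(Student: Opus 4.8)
The plan is to show that the logarithm of the right-hand side of \eqref{Eq:A2_Worst}, regarded as a function of $B$ over the real half-line $B\ge 1$ with $N_\text{t}$ held fixed, has strictly positive derivative; monotonicity of the integer sequence then follows a fortiori. Writing $f(B) = \frac{\pi}{2^B}\left(\frac{1}{2^B+1}\right)^{N_\text{t}/2^B-1}$ and $g(B) = \ln f(B) = \ln\pi - B\ln 2 - \left(\frac{N_\text{t}}{2^B}-1\right)\ln(2^B+1)$, it suffices to prove $g'(B)>0$ for all $B\ge 1$, since $\ln$ is increasing. This mirrors the derivative argument already invoked for Proposition~\ref{Statement:ErrorBound_A1}.

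First I would differentiate, using $\frac{d}{dB}2^{-B} = -2^{-B}\ln 2$ and $\frac{d}{dB}\ln(2^B+1) = \frac{2^B\ln 2}{2^B+1}$, to get
\begin{equation*}
g'(B) = \ln 2\left(\frac{N_\text{t}}{2^B}\ln(2^B+1) - \frac{N_\text{t}-2^B}{2^B+1} - 1\right),
\end{equation*}
so the claim reduces to the inequality $\frac{N_\text{t}}{2^B}\ln(2^B+1) - \frac{N_\text{t}-2^B}{2^B+1} > 1$. Next I would substitute $u = 2^B\ge 2$ and $t = N_\text{t}/u\ge 1$ (this last inequality being exactly the hypothesis $N_\text{t}\ge 2^B$); the inequality becomes $t\ln(u+1) - \frac{u(t-1)}{u+1} > 1$, which rearranges to $t\left(\ln(u+1) - \frac{u}{u+1}\right) > \frac{1}{u+1}$. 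The coefficient $\ln(u+1) - \frac{u}{u+1}$ is strictly positive for $u\ge 2$, since $\ln(u+1)\ge\ln 3 > 1 > \frac{u}{u+1}$; hence the left side is minimized over $t\ge 1$ at $t=1$, and it remains only to check $\ln(u+1) - \frac{u}{u+1} > \frac{1}{u+1}$, i.e. $\ln(u+1) > 1$, i.e. $u+1 > e$. This holds because $u = 2^B\ge 2 > e-1$. Therefore $g'(B)>0$, $f$ is strictly increasing on $B\ge 1$, and restricting to integers gives the proposition.

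I do not expect a real obstacle here; the only points requiring care are recognizing that the hypothesis $N_\text{t}\ge 2^B$ is precisely what makes $t=1$ the extremal case (so no further lower bound on $N_\text{t}$ is needed), and that the range $B\ge 1$, i.e. $u\ge 2$, is exactly what guarantees $\ln(u+1)>1$, which fixes the sign of the derivative. An alternative would be to prove the discrete statement directly by bounding the ratio $f(B+1)/f(B)$ below by $1$, but the continuous derivative route is cleaner and keeps the argument parallel to the treatment of (A1).
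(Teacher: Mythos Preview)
Your argument is correct. The derivative computation checks out, and the reduction via $u=2^B$, $t=N_\text{t}/2^B$ is clean: the hypothesis $N_\text{t}\ge 2^B$ is exactly $t\ge 1$, and the range $B\ge 1$ is exactly $u\ge 2$, which gives $\ln(u+1)>1$ and hence $g'(B)>0$ throughout the admissible region.

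Your route is genuinely different from the paper's. The paper argues discretely, showing $f(b)<f(b+1)$ for each integer $b\ge 1$ by bounding the ratio directly, and it splits into two cases: $b=1$ is handled by computing the ratio $\tfrac{6}{5}(\sqrt{5}/3)^{N_\text{t}/2}<1$ explicitly, while $b\ge 2$ is handled by a short chain of crude inequalities (replacing $2^b+1$ by $2^b$ and $2^{b+1}+1$ by $2^{b+2}$). Your continuous derivative argument avoids the case split and treats all $B\ge 1$ uniformly, which is tidier and, as you note, keeps the style parallel to the treatment of (A1). The paper's approach has the minor advantage of being calculus-free and entirely elementary, but at the cost of the ad hoc case distinction. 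Interestingly, the discrete ratio method you mention as an alternative is precisely what the paper does.
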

\begin{proof}
See Appendix~\ref{Proof:ErrorBound_A2_2}.
\end{proof}

It can be inferred from Proposition~\ref{Statement:ErrorBound_A2_2} that $B=1$ is  most efficient in locating the target phase $\psi^*$ in (A2) with fixed $N_\text{t}$. Since in (A2), one additional feedback bit is needed as compared to (A1), we conclude that two bits per feedback is optimal for (A2), or $B=1$ is optimal for both (A1) and (A2). This is quite a surprising result as using large number of bits per ER feedback does not help improve the performance of our proposed phase adaptation algorithms with or without ER memory.\footnote{This result is practically meaningful as small $B$ also helps reduce the energy used for sending the feedback from the ER to ETs reliably, thus improving the net energy harvested. In this paper, for the purpose of exposition, we assume the feedback energy is negligible and the feedback bits are received by all ETs without error.}  

Next, we compare the performance of the two algorithms under the same value of $B$ assuming fixed $N_\text{t}$.

\begin{proposition}  \label{Statement:ErrorBound_A2_3}
Given $N_\text{t}\geq 2^B$, the phase-error upper bound of (A2) given in \eqref{Eq:A2_Worst} is always smaller than that of (A1) given in \eqref{Eq:ErrorBound} for the same value of $B\geq 1$.  
\end{proposition}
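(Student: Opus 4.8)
The plan is to reduce the claimed inequality to a single elementary comparison between the two explicit upper bounds, after a convenient change of notation. Writing $N = N_\text{t}/2^B$ for the number of feedback windows (an integer by assumption) and abbreviating $a = 2^B$, we have $a \geq 2$ and $N \geq 1$. In this notation the (A1) bound in \eqref{Eq:ErrorBound} becomes $U_1 = \pi\, a^{-N}$ and the (A2) bound in \eqref{Eq:A2_Worst} becomes $U_2 = \pi\, a^{-1}(a+1)^{-(N-1)}$, so the proposition is equivalent to the inequality $U_2 < U_1$ for the common value of $B$.

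The key step is then to form the ratio and cancel the powers of $a$ against one another:
\begin{equation*}
\frac{U_2}{U_1} \;=\; \frac{\pi\, a^{-1}(a+1)^{-(N-1)}}{\pi\, a^{-N}} \;=\; a^{\,N-1}(a+1)^{-(N-1)} \;=\; \l(\frac{a}{a+1}\r)^{N-1}.
\end{equation*}
Because $0 < a/(a+1) < 1$ for every $a = 2^B \geq 2$ and the exponent $N-1$ is a nonnegative integer, this ratio is at most $1$, and it is \emph{strictly} less than $1$ as soon as $N \geq 2$, i.e.\ $N_\text{t} > 2^B$. Hence $U_2 < U_1$ whenever more than one feedback window is used, which is the content of the proposition; the sole borderline case $N = 1$ (equivalently $N_\text{t} = 2^B$) gives $U_2 = U_1$, consistently with the fact that (A1) and (A2) behave identically over the very first feedback window.

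I do not expect any real obstacle here: the argument is a one-line algebraic identity once the bounds are written in the $a = 2^B$, $N = N_\text{t}/2^B$ form. The only point requiring a little care is the exponent bookkeeping in \eqref{Eq:A2_Worst}: the memory-aided shrink factor $1/(2^B+1)$ is applied only $N-1$ times (the first window being common to both algorithms), and it is precisely this ``$-1$'' in the exponent that lets the extra prefactor $a^{-1}$ in $U_2$ absorb one power of $a$ from $a^{-N}$ and leave the clean quotient $(a/(a+1))^{N-1}$. Conceptually this just reflects that from the second window onward (A2) contracts its working set by $1/(2^B+1)$ per window versus $1/2^B$ for (A1), so it is strictly more efficient in each such window; I would add one clause noting the $N=1$ equality so that the phrase ``always smaller'' is understood as ``smaller for every $N_\text{t} > 2^B$''.
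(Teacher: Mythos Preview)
Your proof is correct and essentially the same as the paper's: the paper simply observes $\frac{1}{2^B}\bigl(\frac{1}{2^B+1}\bigr)^{N_\text{t}/2^B-1} < \frac{1}{2^B}\bigl(\frac{1}{2^B}\bigr)^{N_\text{t}/2^B-1} = \bigl(\frac{1}{2^B}\bigr)^{N_\text{t}/2^B}$, which is exactly your ratio identity $U_2/U_1=(a/(a+1))^{N-1}$ in disguise. Your explicit flagging of the borderline case $N_\text{t}=2^B$ (where the two bounds coincide) is a valid refinement that the paper's one-line proof glosses over.
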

\begin{proof}
It follows that
\begin{equation*}
\frac{1}{2^B} \l(\frac{1}{2^B +1}\r)^{\frac{N_\text{t}}{2^B}-1} < \frac{1}{2^B} \l(\frac{1}{2^B}\r)^{\frac{N_\text{t}}{2^B}-1} = \l(\frac{1}{2^B}\r)^{\frac{N_\text{t}}{2^B}}.
\end{equation*}
The proof is thus completed. 
\end{proof}

Propositions~\ref{Statement:ErrorBound_A2_3} shows that (A2) is in general more efficient than (A1) in phase adaptation, including their respective optimal designs with $B=1$ according to Propositions~\ref{Statement:ErrorBound_A1} and \ref{Statement:ErrorBound_A2_2}. This is expected since (A2) exploits the memory at the ER and uses one additional feedback bit compared to (A1) for the same $B$ value.

\subsection{Efficiency of Distributed EB with Sequential Training}
In this subsection, we investigate the efficiency of the proposed distributed EB protocol with sequential training presented in Section~\ref{Subsection:Sequential}, after the training over $(M-1)$ phase adaptation intervals (see Fig.~\ref{Fig:Protocol_Sequential}), based on the performance analysis of the phase adaptation algorithms used in each phase adaptation interval given in the previous subsection. First, we present a lower bound on the efficiency $\eta$ defined in \eqref{Eq:Efficiency} in the following proposition.

\begin{proposition} \label{Proposition:LowerBound}
\begin{itemize}
\item Using (A1) (i.e., $B$-bit feedback without memory) for transmit phase adaptation, the efficiency of the proposed distributed EB protocol with sequential training is lower-bounded by
\begin{equation} \label{Eq:EfficiencyLower}
\eta \geq \frac{1}{Q^\star}\l(\sum_{m=1}^M \beta_m + \sum_{i,j=1, i\neq j}^M \sqrt{\beta_i \beta_j}\cos^2\l(2^{-B\frac{N_\text{t}}{2^B}} \pi\r) \r).
\end{equation}

\item Using (A2) (i.e., $(B+1)$-bit feedback with memory) for transmit phase adaptation, the efficiency $\eta$ is lower-bounded by
\begin{equation} \label{Eq:EfficiencyLower_A2}
\eta \geq \frac{1}{Q^\star}\l(\sum_{m=1}^M \beta_m + \sum_{i,j=1, i\neq j}^M \sqrt{\beta_i \beta_j}\cos^2\l( \frac{\pi}{2^B} \l(\frac{1}{2^B +1}\r)^{\frac{N_\text{t}}{2^B}-1}\r) \r).
\end{equation}
\end{itemize}
\end{proposition}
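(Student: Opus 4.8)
The plan is to track the single running phasor that the ER sees while ET$_2,\dots,$ET$_M$ adapt one after another, and to show that its squared magnitude --- which is precisely $Q_\text{d}/P$ --- retains at least a fraction $\cos^2\epsilon$ of the coherent cross-term of $Q^\star/P$, where $\epsilon$ is the per-interval phase-error bound already proved: $\epsilon=2^{-B N_\text{t}/2^B}\pi$ by \eqref{Eq:ErrorBound} if (A1) is used in each interval, and $\epsilon=\frac{\pi}{2^B}\left(\frac{1}{2^B+1}\right)^{N_\text{t}/2^B-1}$ by \eqref{Eq:A2_Worst} if (A2) is used; note that $\epsilon\le\pi/2^B\le\pi/2$ whenever $N_\text{t}\ge 2^B$. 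First I would fix notation: write $\gamma_m=(\bar{\phi}_m-\theta_m)-(\bar{\phi}_1-\theta_1)$ for the effective received phase of ET$_m$ relative to ET$_1$ (so $\gamma_1=0$), and $R_m=\sum_{i=1}^m\sqrt{\beta_i}\,e^{j\gamma_i}$ with $j=\sqrt{-1}$; then $Q_\text{d}=P|R_M|^2$ by \eqref{Eq:P_r Protocol} and $Q^\star=P\big(\sum_{m=1}^M\sqrt{\beta_m}\big)^2$ by \eqref{Eq:P_r max}. The structural feature of sequential training that I would exploit is that in its $m$th interval only ET$_{m+1}$ adapts while ET$_1,\dots,$ET$_m$ stay fixed, so the optimal common phase $\psi^*$ of the (singleton) adapting set merely aligns ET$_{m+1}$'s phasor with the aggregate phasor of the already-adapted ETs; hence $\gamma_{m+1}$ would equal $\arg(R_m)$ in the absence of phase error, and by \eqref{Eq:Error} together with the per-interval bound, $\gamma_{m+1}=\arg(R_m)+e_{m+1}$ with $|e_{m+1}|\le\epsilon$. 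Consequently $R_{m+1}=R_m+\sqrt{\beta_{m+1}}\,\frac{R_m}{|R_m|}\,e^{je_{m+1}}$, so by the law of cosines
\begin{equation*}
|R_{m+1}|^2=|R_m|^2+2\sqrt{\beta_{m+1}}\,|R_m|\cos e_{m+1}+\beta_{m+1}\;\ge\;|R_m|^2+2\sqrt{\beta_{m+1}}\,|R_m|\cos\epsilon+\beta_{m+1},
\end{equation*}
using $\cos e_{m+1}\ge\cos\epsilon\ge 0$.

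The core step is to prove by induction on $m$ the strengthened invariant
\begin{equation*}
|R_m|^2\;\ge\;\cos^2\epsilon\bigg(\sum_{i=1}^m\sqrt{\beta_i}\bigg)^2+\sin^2\epsilon\sum_{i=1}^m\beta_i,\qquad m=1,\dots,M,
\end{equation*}
which holds with equality at $m=1$. For the inductive step I would substitute the recursion above into the hypothesis for $m$; the one part that is not bookkeeping is that the new cross term $2\sqrt{\beta_{m+1}}|R_m|\cos\epsilon$ must dominate $2\cos^2\epsilon\sqrt{\beta_{m+1}}\sum_{i\le m}\sqrt{\beta_i}$, and this is exactly what the hypothesis provides after discarding the nonnegative term $\sin^2\epsilon\sum_{i\le m}\beta_i$ and taking a square root (giving $|R_m|\ge\cos\epsilon\sum_{i\le m}\sqrt{\beta_i}$, which incidentally keeps $|R_m|>0$, so that $\arg(R_m)$ above is well defined). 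I expect the main --- essentially the only --- obstacle to be recognizing that one must carry this strengthened invariant: the naive bound $|R_m|^2\ge\cos^2\epsilon\big(\sum_{i\le m}\sqrt{\beta_i}\big)^2$ does not propagate, whereas the extra additive slack $\sin^2\epsilon\sum_{i\le m}\beta_i$ is precisely what closes the recursion; all remaining manipulations are elementary.

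Finally, taking $m=M$ in the invariant, expanding $\big(\sum_m\sqrt{\beta_m}\big)^2=\sum_m\beta_m+\sum_{i\neq j}\sqrt{\beta_i\beta_j}$, and using $\sin^2\epsilon+\cos^2\epsilon=1$ on the $\sum_m\beta_m$ part, I would obtain
\begin{equation*}
Q_\text{d}=P|R_M|^2\;\ge\;P\bigg(\sum_{m=1}^M\beta_m+\cos^2\epsilon\sum_{i,j=1, i\neq j}^M\sqrt{\beta_i\beta_j}\bigg).
\end{equation*}
Dividing by $Q^\star$ as in \eqref{Eq:Efficiency} and substituting $\epsilon=2^{-B N_\text{t}/2^B}\pi$ then yields \eqref{Eq:EfficiencyLower} for the case of (A1), while substituting $\epsilon=\frac{\pi}{2^B}\left(\frac{1}{2^B+1}\right)^{N_\text{t}/2^B-1}$ yields \eqref{Eq:EfficiencyLower_A2} for the case of (A2).
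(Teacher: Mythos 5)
Your proof is correct and is essentially the paper's own argument: the paper likewise proceeds by induction on the number of sequentially adapted ETs, using the recursion $Q_\text{d}^{(k+1)}=\beta_{k+1}+Q_\text{d}^{(k)}+2\cos(e_{k+1})\sqrt{\beta_{k+1}Q_\text{d}^{(k)}}$ together with the strengthened invariant $Q_\text{d}^{(k)}/P\geq\sum_m\beta_m+\sum_{i\neq j}\sqrt{\beta_i\beta_j}\cos(e_i)\cos(e_j)$ (equivalently $\bigl(\sum_m\sqrt{\beta_m}\cos e_m\bigr)^2+\sum_m\beta_m\sin^2 e_m$, i.e., your invariant with per-ET errors in place of the uniform bound $\epsilon$), and then substitutes the (A1)/(A2) phase-error bounds at the end. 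Your phasor formulation with a uniform $\epsilon$ is just a cosmetically different packaging of the same induction, so no gap.
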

\begin{proof}
See Appendix~\ref{Proof:LowerBound}.
\end{proof}

It can be observed from \eqref{Eq:EfficiencyLower} and \eqref{Eq:EfficiencyLower_A2} that with fixed $B$, as the number of total training slots per phase adaptation interval $N_\text{t}$ becomes large, both lower bounds approach one. In other words, for the ideal case of $e = 0$ after each phase adaptation interval, which is obtained with $N_\text{t}\rightarrow \infty$, the proposed protocol achieves the maximum harvested power by the optimal EB given in \eqref{Eq:P_r max}. 

Next, we analyze the required number of training slots per phase adaptation interval with (A1) or (A2) to achieve a given target efficiency, denoted by $0<\hat{\eta}\leq 1$. By re-arranging the terms in the inequality $\eta\geq \hat{\eta}$ and using \eqref{Eq:EfficiencyLower} and \eqref{Eq:EfficiencyLower_A2}, we obtain the following corollary. 

\begin{corollary}
\begin{itemize}
\item If the number of training slots $N_\text{t}$ per phase adaptation interval in (A1) satisfies 
\begin{equation} \label{Eq:WorstCaseBound}
N_\text{t} \geq  \frac{2^B}{B}\log_2\l(\frac{\pi}{\arccos\l(\sqrt{\hat{\eta} - \frac{(1-\hat{\eta})\sum_{m=1}^M \beta_m}{\sum_{i,j=1, i\neq j}^M \sqrt{\beta_i \beta_j}} }\r)}\r),
\end{equation}
then it holds that $\eta\geq\hat{\eta}$. 

\item Furthermore, if $N_\text{t}$ in (A2) satisfies
\begin{equation} \label{Eq:WorstCaseBound_A2}
N_\text{t} \geq  2^B + 2^B\log_{2^B+1}\l(\frac{\pi/2^B}{\arccos\l(\sqrt{\hat{\eta} - \frac{(1-\hat{\eta})\sum_{m=1}^M \beta_m}{\sum_{i,j=1, i\neq j}^M \sqrt{\beta_i \beta_j}} }\r)}\r),
\end{equation}
then it holds that $\eta\geq\hat{\eta}$. 

\end{itemize}
\end{corollary}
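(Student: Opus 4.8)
The plan is to obtain this corollary by merely inverting the efficiency lower bounds established in Proposition~\ref{Proposition:LowerBound}. Fix a target efficiency $\hat\eta\in(0,1]$. Since $\eta$ is already lower-bounded by the right-hand side of \eqref{Eq:EfficiencyLower} when (A1) is used (respectively by the right-hand side of \eqref{Eq:EfficiencyLower_A2} when (A2) is used), it suffices to exhibit conditions on $N_\text{t}$ under which that right-hand side is itself at least $\hat\eta$; any such condition is sufficient for $\eta\ge\hat\eta$.

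First I would substitute \eqref{Eq:P_r max} for $Q^\star$ into the lower bound and cancel the common factor $P$, which turns the requirement ``right-hand side of \eqref{Eq:EfficiencyLower} $\ge\hat\eta$'' into
\begin{equation*}
\sum_{m=1}^M\beta_m+\Big(\sum_{i,j=1,\,i\ne j}^M\sqrt{\beta_i\beta_j}\Big)\cos^2\!\Big(2^{-B\frac{N_\text{t}}{2^B}}\pi\Big)\ \ge\ \hat\eta\Big(\sum_{m=1}^M\beta_m+\sum_{i,j=1,\,i\ne j}^M\sqrt{\beta_i\beta_j}\Big).
\end{equation*}
Isolating the cosine term gives the equivalent condition $\cos^2\!\big(2^{-B N_\text{t}/2^B}\pi\big)\ge \hat\eta-\frac{(1-\hat\eta)\sum_{m}\beta_m}{\sum_{i\ne j}\sqrt{\beta_i\beta_j}}$, and the analogous reduction for (A2), starting from \eqref{Eq:EfficiencyLower_A2}, only replaces the argument of $\cos^2$ by $\frac{\pi}{2^B}\big(\frac{1}{2^B+1}\big)^{N_\text{t}/2^B-1}$ while leaving the right-hand side unchanged.

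Next I would remove the $\cos^2$. The right-hand side $\hat\eta-\frac{(1-\hat\eta)\sum_{m}\beta_m}{\sum_{i\ne j}\sqrt{\beta_i\beta_j}}$ is always at most $1$ (because $\hat\eta\le 1$ forces $\hat\eta-1\le\frac{(1-\hat\eta)\sum_m\beta_m}{\sum_{i\ne j}\sqrt{\beta_i\beta_j}}$), and if it is negative the inequality holds for every $N_\text{t}$ so the bound is vacuous; hence assume it lies in $[0,1]$. Under the standing assumption $N_\text{t}\ge 2^B$ of Proposition~\ref{Proposition:LowerBound} the exponent $N_\text{t}/2^B$ is at least $1$, so in either algorithm the argument of $\cos^2$ lies in $[0,\pi/2^B]\subseteq[0,\pi/2]$, where $\cos$ is nonnegative and strictly decreasing. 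Therefore the condition is equivalent to ``argument of $\cos$'' $\le\arccos\!\big(\sqrt{\hat\eta-\frac{(1-\hat\eta)\sum_m\beta_m}{\sum_{i\ne j}\sqrt{\beta_i\beta_j}}}\big)$, with both sides strictly positive. Finally I would take reciprocals and logarithms and solve the resulting linear inequality in $N_\text{t}$: for (A1), applying $\log_2$ turns $2^{-BN_\text{t}/2^B}\pi\le\arccos(\cdot)$ into $\frac{BN_\text{t}}{2^B}\ge\log_2\!\big(\pi/\arccos(\cdot)\big)$, which is exactly \eqref{Eq:WorstCaseBound}; for (A2), applying $\log_{2^B+1}$ turns $\frac{\pi}{2^B}\big(\frac{1}{2^B+1}\big)^{N_\text{t}/2^B-1}\le\arccos(\cdot)$ into $\frac{N_\text{t}}{2^B}-1\ge\log_{2^B+1}\!\big((\pi/2^B)/\arccos(\cdot)\big)$, i.e.\ \eqref{Eq:WorstCaseBound_A2}.

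There is essentially no hard step here; the whole argument is a chain of equivalences built on Proposition~\ref{Proposition:LowerBound}. The only points that require any care are (i) checking that the argument of the square root and of $\arccos$ is well defined --- handled above by the observation that $\hat\eta-\frac{(1-\hat\eta)\sum_m\beta_m}{\sum_{i\ne j}\sqrt{\beta_i\beta_j}}\le 1$ together with treating the negative case as vacuous --- and (ii) invoking monotonicity of $\cos$ on the correct subinterval of $[0,\pi/2]$, which is guaranteed precisely by the hypothesis $N_\text{t}\ge 2^B$ that is already part of Proposition~\ref{Proposition:LowerBound}.
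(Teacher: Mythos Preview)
Your proposal is correct and matches the paper's own approach, which simply states that the corollary is obtained ``by re-arranging the terms in the inequality $\eta\geq\hat\eta$ and using \eqref{Eq:EfficiencyLower} and \eqref{Eq:EfficiencyLower_A2}.'' If anything, you are more careful than the paper, explicitly verifying that the argument of $\arccos\sqrt{\cdot}$ lies in $[0,1]$ and that the cosine argument stays in $[0,\pi/2]$ so that the inversion steps are legitimate.
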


It is worth noting from \eqref{Eq:WorstCaseBound} that for the special case of $\beta_1 = \beta_2 = ... = \beta_K = \beta$ (i.e., all ETs have identical channel gains to ER), and with the most efficient design with $B=1$, we have $\sum_{m=1}^M \beta_m = M\beta$ and $\sum_{i,j=1, i\neq j}^M \sqrt{\beta_i \beta_j} = M(M-1)\beta$. As a result, \eqref{Eq:WorstCaseBound} becomes
\begin{equation} \label{Eq:WorstCaseBoundBeta}
N_\text{t} \geq  2 \log_2\l(\frac{\pi}{\arccos\l(\sqrt{ \frac{M\hat{\eta} - 1}{M-1}}\r)}\r) .
\end{equation}
For instance, if $M=5$, i.e., if there are five ETs, \eqref{Eq:WorstCaseBoundBeta} yields $N_\text{t}\geq 9.6188$ with $\hat{\eta} = 0.99$ and $N_\text{t}\geq 12.9462$ with $\hat{\eta} = 0.999$, respectively. Thus, each ET needs $N_\text{t}=10$ and $N_\text{t}=14$ training slots with (A1) per phase adaptation interval (i.e., $N=5$ and $N=7$ feedback windows per phase adaptation interval, respectively, given $B=1$) to ensure 99\% and 99.9\% of the optimal EB gain, respectively, by  using the proposed distributed EB with sequential training.

\section{Numerical Results} \label{Section:Simulation}

In this section, we evaluate the performance of the proposed channel training and distributed EB schemes by simulation. For the simulation, we set the transmit power of each ET as $P=1$ Watt (W), and the number of signal paths between each ET$_m$ and ER as $L_m = 1$ (which corresponds to the line of sight (LoS) environment). Moreover, the channel power gain $\beta_m$ is modeled by  path-loss only, given by $\beta_m = c_0 (r_m/r_0)^{-\delta}$, where $c_0 = -20$ dB is a constant attenuation for the path-loss at a reference distance $r_0 = 1$ meter (m), $\delta = 3$ is the path-loss exponent, and $r_m$ is the distance between ET$_m$ and the ER. We assume the distance $r_m$ and the random phase shift $\theta_m$ of each ET$_m$ are distributed as  $r_m\sim \text{Uniform}(5,15)$ (in meters) and $\theta_m\sim\text{Uniform}(-\pi,\pi)$, respectively. 

\subsection{Comparison of Phase Adaptation Algorithms (A1) and (A2)}

\begin{figure}
\centering
\subfigure[(A1): $B$-bit feedback without memory]{
\centering
\includegraphics[width=8cm]{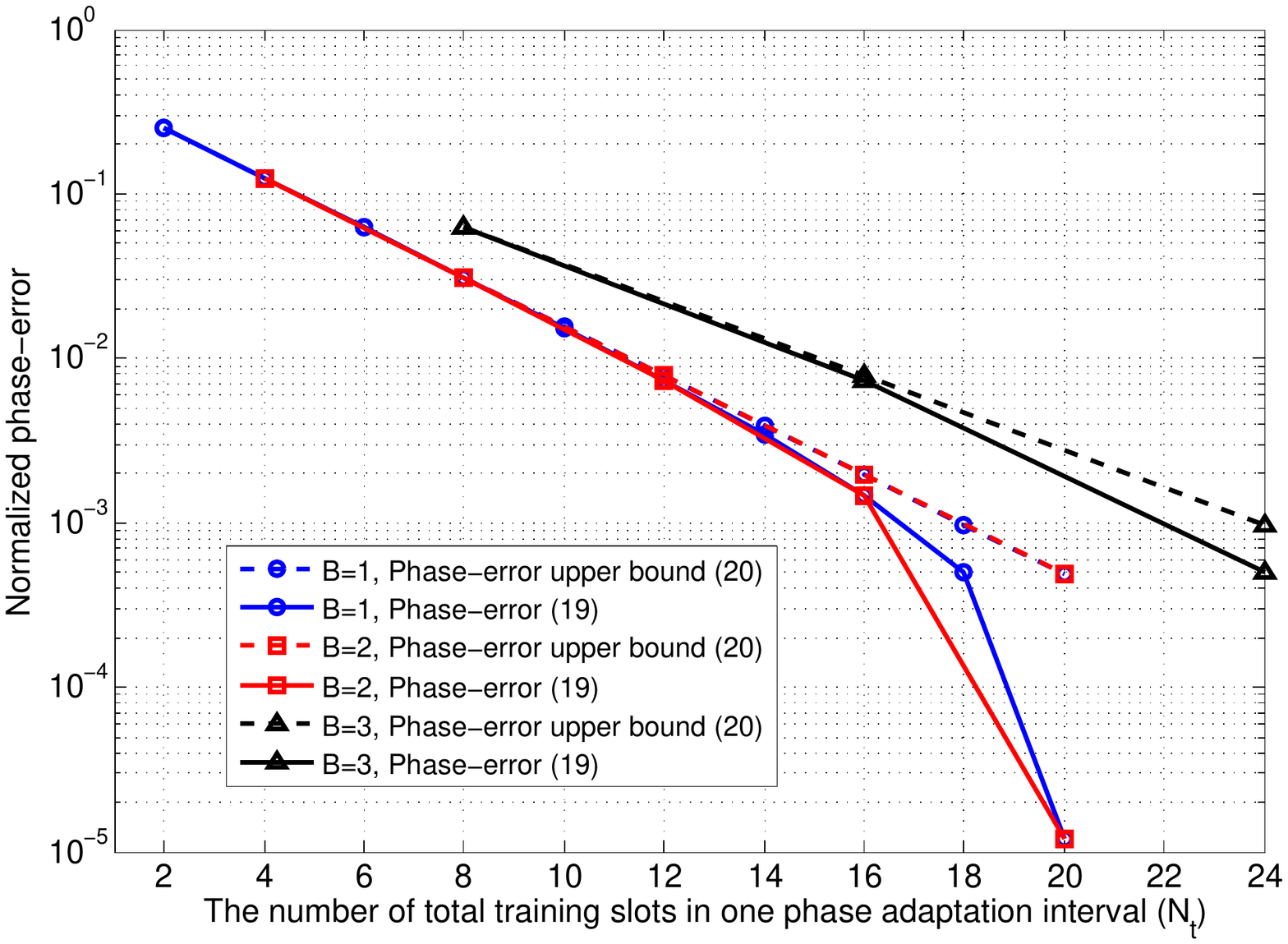}\label{Fig:PhaseError_A1}} 
\subfigure[(A2): $(B+1)$-bit feedback without memory]{
\centering
\includegraphics[width=8cm]{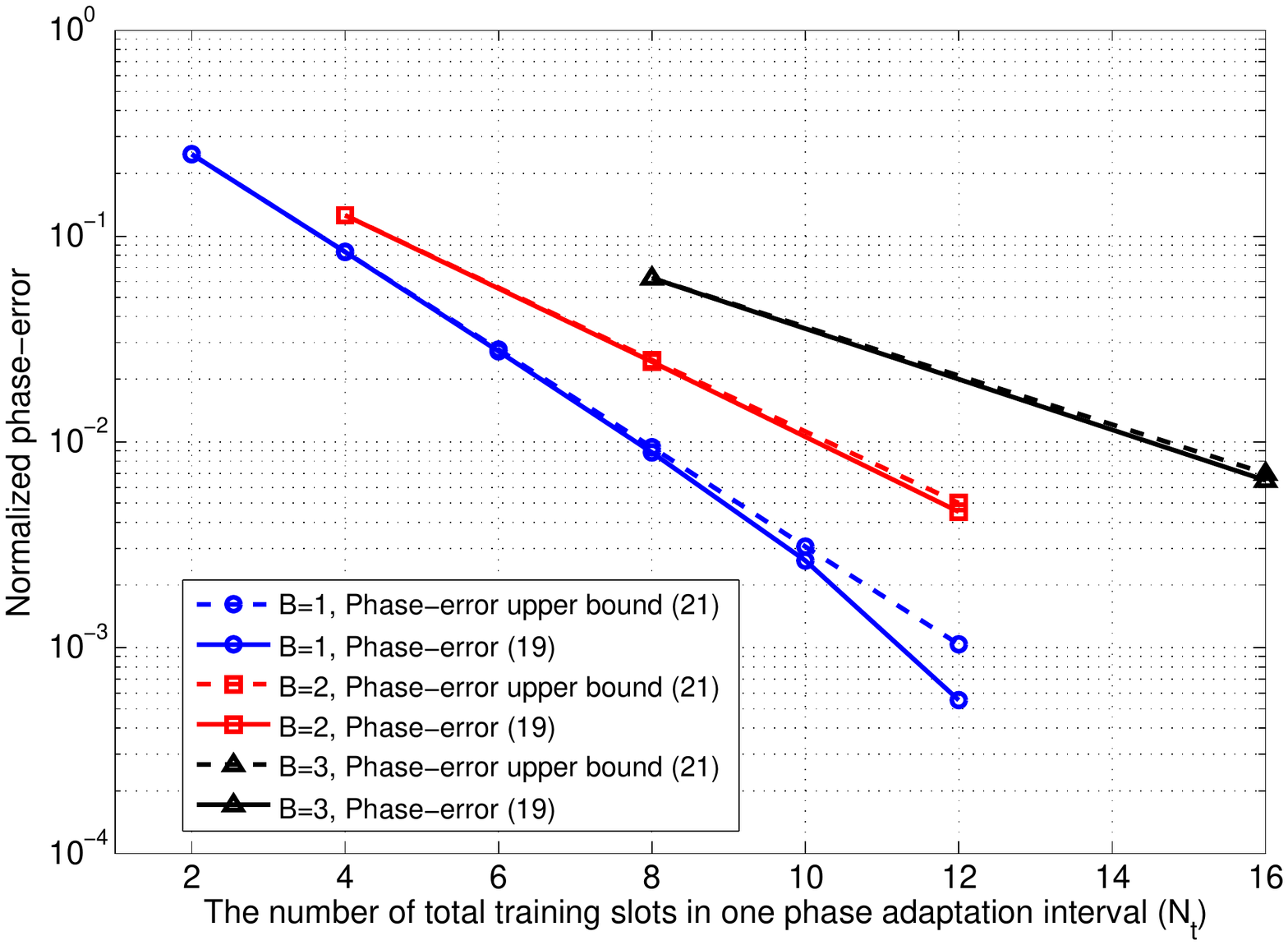}\label{Fig:PhaseError_A2}} 
\subfigure[Comparison of (A1) and (A2) with $B=1$]{
\centering
\includegraphics[width=8cm]{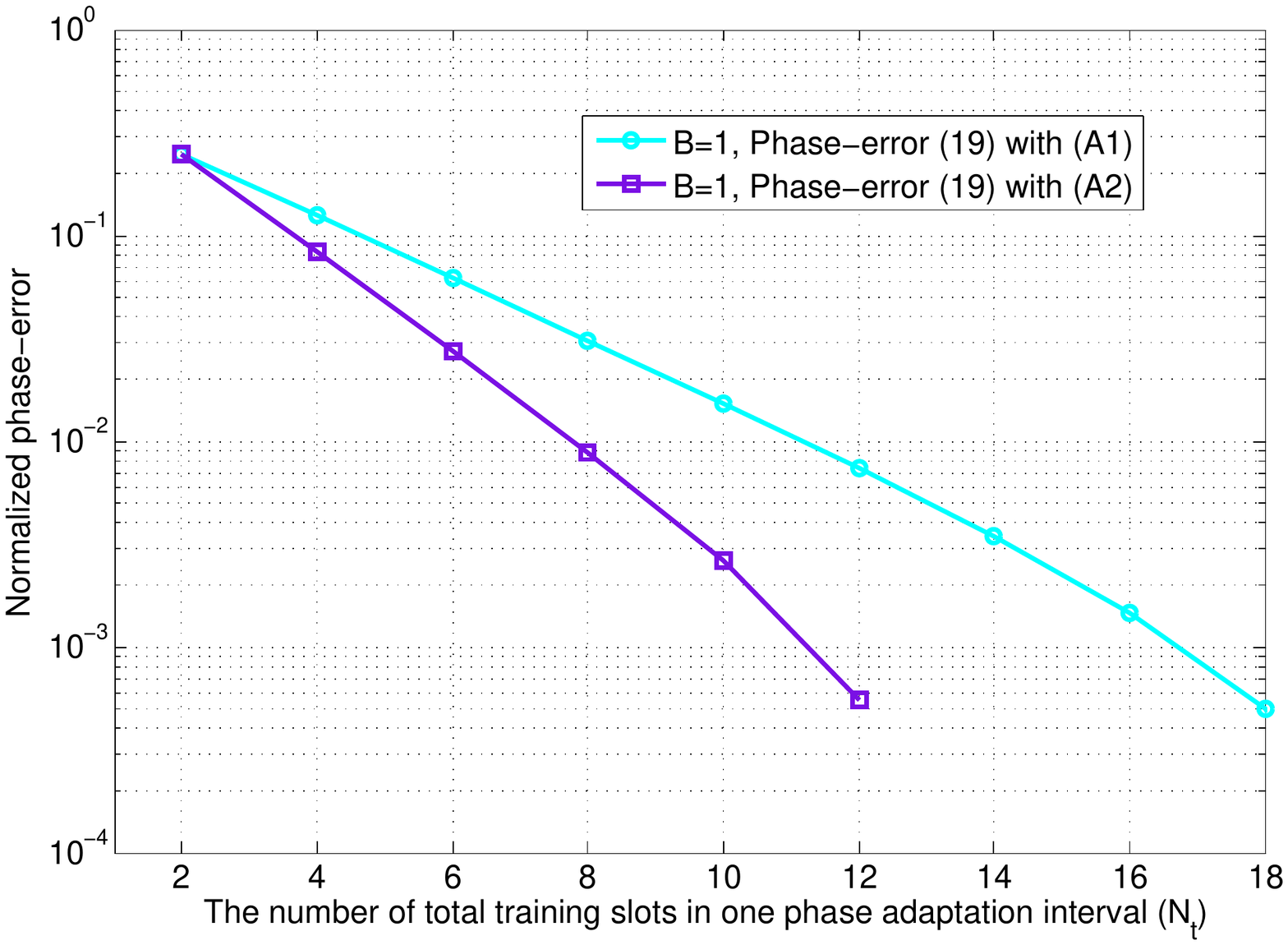}\label{Fig:PhaseError_A1A2}} 
\caption{Normalized phase-error $|e|/2\pi$ versus the total number of training slots $N_\text{t} = N 2^B$ in one phase adaptation interval. }
\label{Fig:PhaseError}
\end{figure}

We first compare the convergence performance of the two proposed phase adaptation algorithms, namely $B$-bit feedback without memory, i.e., (A1), and $(B+1)$-bit feedback with memory, i.e., (A2), presented in Sections~\ref{Subsection:A1} and \ref{Subsection:A2}, respectively. For this simulation, we set the number of ETs to be $M=5$, and consider the case of parallel training for one particular phase adaptation interval (similar results can be obtained for the case of sequential training and are thus omitted due to space limitation). We assume the index set of adapting ETs and non-adapting ETs to be $\cM_\text{A} = \{1,2\}$ and $\cM_\text{NA} = \{3,4,5\}$ (as a result, we have $\cM_\text{I} = \emptyset$), respectively. We assume a random realization of each $\theta_m$, $m=1,...,5$, and set the fixed phase of the non-adapting ETs to be $\bar{\phi}_3 = \bar{\phi}_4 = \bar{\phi}_5 = 0$.

Based on the above setup, in Fig.~\ref{Fig:PhaseError} we plot the normalized phase-error between the phase determined by the proposed (A1) or (A2) and the optimal phase, i.e., $|e|/2\pi$, where $e = \bar{\phi}_m - \psi^*$ as given in \eqref{Eq:Error}, versus the total number of training slots $N_\text{t}$ in one particular phase adaptation interval (see Fig.~\ref{Fig:TimeSlots}). The upper bounds of the phase-error for (A1) and (A2), given in \eqref{Eq:ErrorBound} and \eqref{Eq:A2_Worst}, respectively, are also plotted in Figs.~\ref{Fig:PhaseError_A1} and \ref{Fig:PhaseError_A2}, respectively. First, in Fig.~\ref{Fig:PhaseError_A1}, it is observed that the normalized phase-errors of (A1) with $B=1$ and $B=2$ are the same, which are smaller than that with $B=3$. This result is consistent with Proposition~\ref{Statement:ErrorBound_A1}. Next, in Fig.~\ref{Fig:PhaseError_A2}, it is observed that the smaller the value of $B$ is, the better is the normalized phase-error performance for (A2), which is in accordance with the result in Proposition~\ref{Statement:ErrorBound_A2_2}. Last, we compare the normalized phase-errors of (A1) and (A2) both with $B=1$ in Fig.~\ref{Fig:PhaseError_A1A2}. It can be seen that the performance of  (A2) is better than that of (A1), as expected from Proposition~\ref{Statement:ErrorBound_A2_3}.

In the following two subsections, we evaluate the performances of the proposed distributed EB protocols with sequential training and parallel training, respectively. Since $B=1$ is optimal for both (A1) and (A2) as shown in Section~\ref{Section:PhaseAdaptation} and verified by simulation in this subsection, we  consider $B=1$ in the rest of this section. As a result, the number of feedback windows per phase adaptation interval is given by $N = N_\text{t}/2$ (see Fig.~\ref{Fig:TimeSlots}).

\subsection{Distributed EB with Sequential Training}

\begin{figure} 
\centering
\includegraphics[width=9cm]{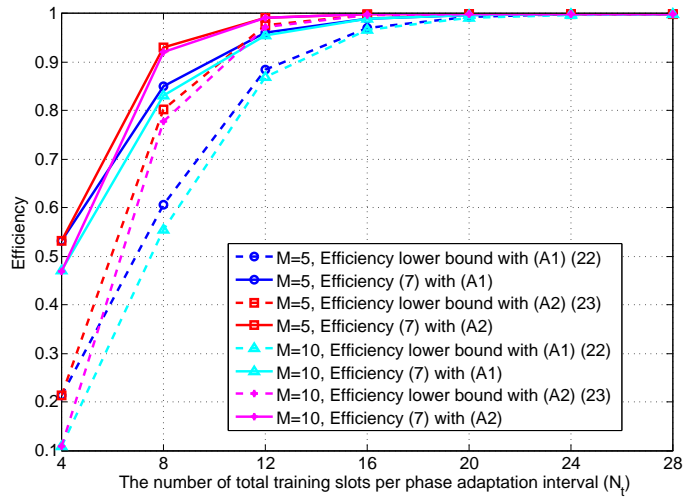}
\caption{The efficiency $\eta$ of the proposed distributed EB protocol with sequential training versus the number of training slots $N_\text{t}$ per phase adaptation intervals. } \label{Fig:Efficiency}
\end{figure}

Fig.~\ref{Fig:Efficiency} shows the distributed EB efficiency $\eta$ defined in \eqref{Eq:Efficiency}, with sequential training proposed in Section~\ref{Subsection:Sequential}, and its lower bounds with (A1) and (A2) given in \eqref{Eq:EfficiencyLower} and \eqref{Eq:EfficiencyLower_A2}, respectively, for the cases of $M=5$ and $M=10$, by averaging over 5000 randomly generated $r_m$ and $\theta_m$, $m=1,...,M$. First, it is observed that for both cases of $M=5$ and $M=10$,  the efficiency lower bounds given in \eqref{Eq:EfficiencyLower} and \eqref{Eq:EfficiencyLower_A2} become tighter as the number of total training slots $N_\text{t}$ per phase adaptation interval increases, and both eventually converge to 1, as compared to the exact efficiency given in \eqref{Eq:Efficiency}. However, the convergence is slightly faster for the case of $M=5$ than $M=10$. Furthermore, it can be seen from Fig.~\ref{Fig:Efficiency} that $N_\text{t} = 16$ already results in the efficiency higher than 0.95 for all cases plotted. Finally, it is observed that for both cases with $M=5$ and $M=10$, the efficiency with (A2) is higher than that with (A1) under each given $N_\text{t}$. 

\begin{figure} 
\centering
\includegraphics[width=9cm]{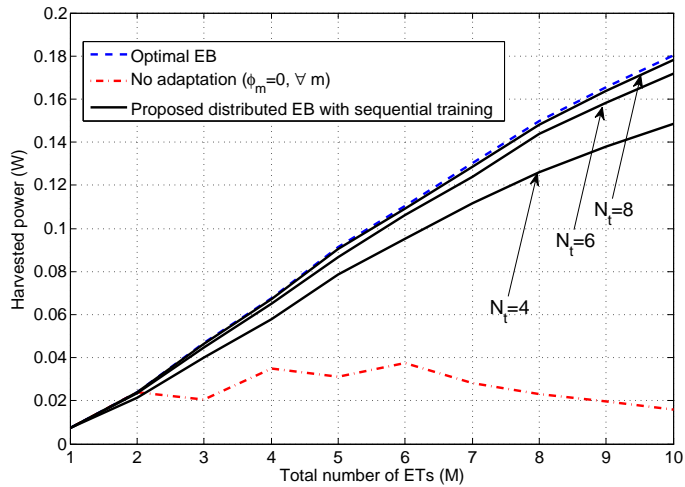}
\caption{The harvested power versus the total number of ETs, $M$. } \label{Fig:EBgain}
\end{figure}

In Fig.~\ref{Fig:EBgain}, we compare the harvested power by the proposed distributed EB protocol with sequential training versus the total number of ETs $M$ with different numbers of training slots $N_\text{t}$ per phase adaptation interval, for one set of random realizations of $r_m$ and $\theta_m$, $m=1,...,M$. Note that for the distributed EB protocol, we assume the use of $(B+1)$-bit feedback with memory, i.e., (A2), in Fig.~\ref{Fig:EBgain} since it has better performance than (A1) without memory. First, it is observed that the harvested power of the proposed distributed EB scheme keeps increasing with $M$, due to more significant EB gains; whereas that of no adaptation, in which each ET$_m$ fixes its phase to be $\bar{\phi}_m = 0$ at all time, fluctuates over $M$ in general. This is due to the fact that the channel phases from ETs to ER are different and as a result, their received signals may add constructively or destructively at ER. Second, it can be seen from Fig,~\ref{Fig:EBgain} that the larger the number of training slots $N_\text{t}$ per phase adaptation interval is, the higher is the harvested power achieved by the proposed distributed EB scheme, which is expected since larger $N_\text{t}$ yields more accurately estimated transmit phase for each ET$_m$ in (A2).

\begin{figure} 
\centering
\includegraphics[width=9cm]{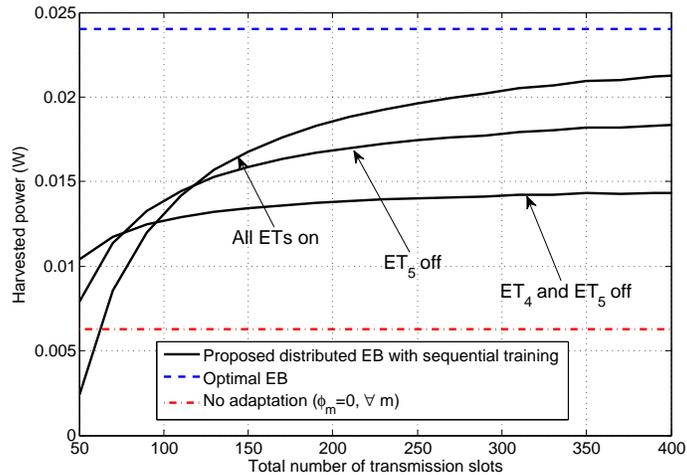}
\caption{The average harvested power versus the total number of transmission slots (including both training and energy transmission) with $M=5$ and $N=5$. } \label{Fig:TradeOff}
\end{figure}

In Fig.~\ref{Fig:TradeOff}, we plot the average harvested power by the proposed distributed EB with sequential training versus the total number of transmission slots (including both training and energy transmission shown in Fig.~\ref{Fig:Protocol_Sequential}), by averaging over 50000 randomly generated $r_m$ and $\theta_m$, $m=1,...,M$, to investigate the effect of the training overhead on the performance of the proposed distributed EB scheme. Note that we assume the use of (A2) with $B=1$ for the distributed EB protocol, $M=5$ and $N=5$. Moreover, by reordering the channel power gains of ETs as $\beta_1\geq\beta_2\geq...\geq\beta_5$, we also plot the average harvested powers for the cases without ET$_5$ (i.e., the ET with the weakest channel to ER is off) or both ET$_5$ and ET$_4$ (i.e., the two ETs with smallest channel gains are off).  It can be seen from Fig.~\ref{Fig:TradeOff} that if the total number of training slots is less than 120, it is no more optimal to let all ETs transmit as the ETs with weak channels do not contribute much to the overall EB gain but require the same training time for phase adaptation (otherwise, their received signals may not add coherently to other ETs' signals with stronger channels at ER); thus, they should be switched off to maximize the average harvested power at ER.  Moreover, when the total number of training slots is less than 60, the performance with all ETs on is even worse than that of no adaptation with $\phi_m=0$, $m=1,...,M$. However, as the total number of training slots increases, the average harvested power also increases with more ETs switched on and finally approaches the maximum harvested power by the optimal EB with all ETs on, due to the reduced training overhead. Notice that in practice, the total transmission time is constrained by the channel coherence time in any given propagation environment.  

\subsection{Distributed EB with Parallel Training}

\begin{figure} 
\centering
\includegraphics[width=9cm]{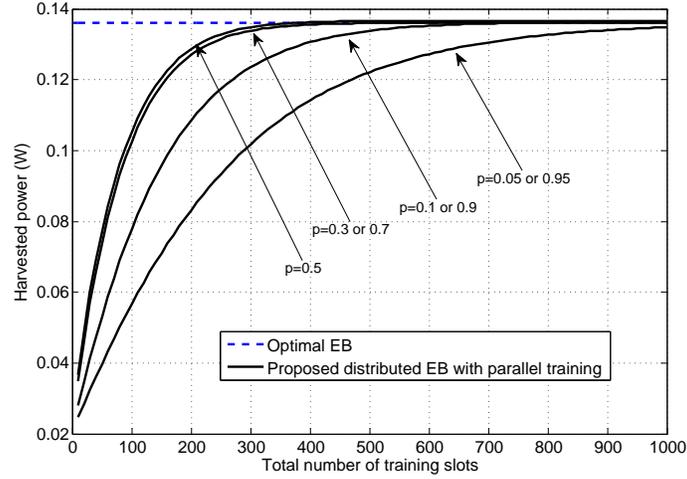}
\caption{The average harvested power of the proposed distributed EB protocol with parallel training versus the total number of training slots with different values of $p$.} \label{Fig:Coin}
\end{figure}

\begin{figure} 
\centering
\includegraphics[width=9cm]{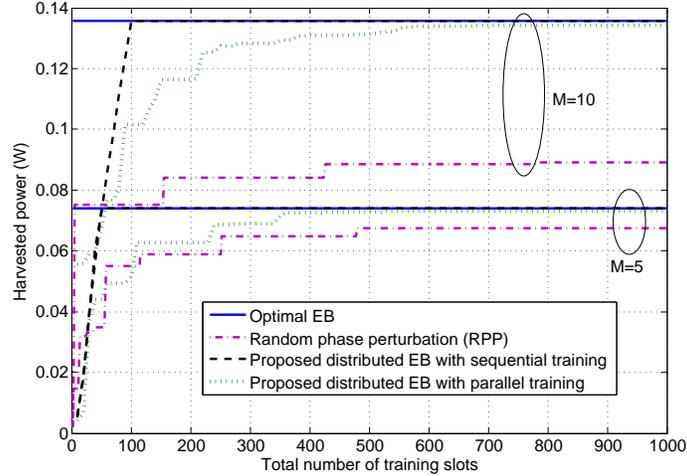}
\caption{The harvested power versus the total number of training slots.} \label{Fig:Comparison}
\end{figure}

Fig.~\ref{Fig:Coin} shows the convergence performance of our proposed distributed EB protocol with parallel training presented in Section~\ref{Subsection:parallel}, assuming the use of $(B+1)$-bit feedback with memory, i.e., (A2), for transmit phase adaptation. For comparison, we plot the results with different values of $p$ (i.e., the probability that each ET adapts its phase in each phase adaptation interval). The harvested power is averaged over 50000 randomly generated $r_m$ and $\theta_m$, $m=1,...,M$ in each feedback window. Moreover, we set $M=7$ and $N=5$ for the simulation. It can be seen from Fig.~\ref{Fig:Coin} that our proposed protocol with parallel training indeed converges to the maximum harvested power by optimal EB. Furthermore, it is observed that the convergence speed with $p=0.5$ is the fastest, and it gets slower as $p$ becomes more biased. The reason can be intuitively explained as follows. If  $p$ is too large, it is likely that the number of adapting ETs in $\cM_\text{A}$ becomes large, and as a result the gain from these ETs' phase adaptations to match the small (if any) number of non-adapting ETs is also small, rendering the overall convergence slower. A similar explanation also applies to the case with too small $p$.

In Fig.~\ref{Fig:Comparison}, we compare the convergence performance of the distributed EB protocols with sequential training versus parallel training, using (A2), for one set of random realizations of $r_m$ and $\theta_m$, $m=1,...,M$. We also compare our proposed protocols with an existing training scheme for distributed beamforming proposed in wireless communication \cite{J_MHMB:2010}, referred to as random phase perturbation (RPP), which is also applicable to distributed EB for WET of our interest. In this scheme, transmitters independently adjust their transmit phases via random perturbation based on one-bit feedback from the receiver, which indicates the increase or decrease of the current signal-to-noise-ratio (SNR) as compared to its recorded highest SNR. For this simulation, we set the total number of training slots per phase adaptation interval to be $N_\text{t} = 10$, and we present two cases with different number of ETs, i.e., $M=5$ and $M=10$, respectively. As a result, the distributed EB protocol with sequential training requires $N_\text{t}(M-1)= 40$ training slots in total for the case of $M=5$ and $90$ training slots for the case of $M=10$, in order for all ETs to set their phases $\bar{\phi}_m$, $m=1,...,M$  (see Figs.~\ref{Fig:Protocol_Sequential} and \ref{Fig:TimeSlots}). First, it is observed that the convergence speed of the parallel training is slower than that of the sequential training, which is expected due to the random selection of the adapting ETs in each phase adaptation interval. Next, it can be seen that the harvested power of our proposed distributed EB with sequential or parallel training converges much faster than that of the RPP benchmark scheme, and it is also much larger after convergence, especially when $M$ is large.

\section{Conclusion} \label{Section:Conclusion}
In this paper, we propose new channel training designs for distributed EB in WET systems, where the ETs adjust their transmit phases independently to achieve collaborative WET to a single ER. Based on a new phase adaptation algorithm for adapting ETs to  adapt their transmit phases to match that of other non-adapting ETs based on energy feedback from the ER with or without memory, we devise two distributed EB protocols with sequential training and parallel training with and without the need of centralized scheduling coordination, respectively. It is shown that smaller number of feedback bits per feedback window yields better convergence performance given the total training time for the proposed schemes. The proposed schemes are shown to converge to the optimal EB performance efficiently via both analysis and simulation, and also outperform the existing scheme based on random phase perturbation in terms of both convergence speed and energy efficiency. Possible future extensions of this work to the more general setup with multiple ERs in single- or multi-cell scenarios will be worth pursuing.  

\appendices 

\section{Proof of Lemma~\ref{Statement:ErrorBound_A2_1}} \label{Proof:ErrorBound_A2_1}

The phase-error upper bound in (A2) can be obtained by considering the worst-case scenario in (A2) which may lead to the largest phase-error. To this end, consider any given pair of two adjacent feedback windows, i.e., $n=k$ and $n=k+1$, respectively, with $k\geq 1$, for which we have the following four cases which are different combinations of Cases 1 and 2.
\begin{itemize}
\item $\cA^{(k)}$ is Case 1 and $\cA^{(k+1)}$ is Case 1: The size of the working set is reduced by $\frac{1}{2^B + 1}$.

\item $\cA^{(k)}$ is Case 1 and $\cA^{(k+1)}$ is Case 2: The size of the working set is reduced by $\frac{1}{2(2^B + 1)}$. 

\item $\cA^{(k)}$ is Case 2 and $\cA^{(k+1)}$ is Case 1: The size of the working set is reduced by $\frac{1}{2^B}$.

\item $\cA^{(k)}$ is Case 2 and $\cA^{(k+1)}$ is Case 2: The size of the working set is reduced by $\frac{1}{2^{B + 1}}$.
\end{itemize}

With the above four cases, it can be easily shown that the worst-case  scenario is that Case 1 is repeated over $n\leq N$ which leads to the largest size of the resulting working set $\cA^{(N)}$ over all combinations of Cases 1 and 2, which is given by $2\pi \frac{1}{2^B} \l(\frac{1}{2^B+1}\r)^{\frac{N_\text{t}}{2^B}-1}$. The phase-error upper bound in \eqref{Eq:A2_Worst} can thus be obtained by dividing the size of $\cA^{(N)}$ by $2$, since  $\psi_\text{best}$ is chosen as the center of $\cA^{(N)}$. The proof of Lemma~\ref{Statement:ErrorBound_A2_1} is thus completed.

\section{Proof of Proposition~\ref{Statement:ErrorBound_A2_2}} \label{Proof:ErrorBound_A2_2}
Proposition~\ref{Statement:ErrorBound_A2_2} can be proved by showing that the RHS of \eqref{Eq:A2_Worst} with $B=b$, where $b\geq 1$ is an integer, is always smaller than that of \eqref{Eq:A2_Worst} with $B=b+1$, given $N_\text{t}\geq 2^{b+1}$. In other words, we need to show that the following inequality holds for $b\geq 1$.
\begin{equation} \label{Eq:b b+1}
\frac{1}{2^b} \l(\frac{1}{2^b +1}\r)^{\frac{N_\text{t}}{2^b}-1} < \frac{1}{2^{b+1}} \l(\frac{1}{2^{b+1}+1}\r)^{\frac{N_\text{t}}{2^{b+1}}-1}.
\end{equation}

We first prove that \eqref{Eq:b b+1} holds for $b=1$. Substituting $b=1$ into \eqref{Eq:b b+1} and dividing the left-hand side (LHS) by the RHS yields
\begin{align}
\frac{6}{5}\l(\frac{1}{3}\r)^{\frac{N_\text{t}}{2}} \l(\frac{1}{5}\r)^{-\frac{N_\text{t}}{4}} = \frac{6}{5}\l(\frac{\sqrt{5}}{3}\r)^{\frac{N_\text{t}}{2}} < \frac{6}{5}\l(\frac{\sqrt{5}}{3}\r)< 1,
\end{align}
in which we have used the assumption that $N_\text{t}\geq 2^{b+1} > 2^b$. This shows that \eqref{Eq:b b+1} is true for $b=1$.

Next, we show that \eqref{Eq:b b+1} holds for $b\geq 2$. It follows that
\begin{align}
\frac{1}{2^b} \l(\frac{1}{2^b +1}\r)^{\frac{N_\text{t}}{2^b}-1} &< \frac{1}{2^b} \l(\frac{1}{2^b}\r)^{\frac{N_\text{t}}{2^b}-1} \\
& = \l(\frac{1}{2^b}\r)^{\frac{N_\text{t}}{2^b}}\\
&\overset{(a)} <  \l(\frac{1}{2^{b+2}}\r)^{\frac{N_\text{t}}{2^{b+1}}} \\
& = \frac{1}{2^{b+2}}\l(\frac{1}{2^{b+2}}\r)^{\frac{N_\text{t}}{2^{b+1}}-1} \\
&\overset{(b)} < \frac{1}{2^{b+1}}\l(\frac{1}{2^{b+1}+1}\r)^{\frac{N_\text{t}}{2^{b+1}}-1},
\end{align}
where $(a)$ can be shown to hold for $b\geq 2$, and $(b)$ is true for $b\geq 1$. Thus, \eqref{Eq:b b+1} also holds for $b\geq 2$. By combining the two cases of $b=1$ and $b\geq 2$, the proof of Proposition~\ref{Statement:ErrorBound_A2_2} is thus completed.

\section{Proof of Proposition~\ref{Proposition:LowerBound}} \label{Proof:LowerBound}
For the purpose of exposition, we denote $e_m$ as the phase-error given in \eqref{Eq:Error} for the $(m-1)$th phase adaptation interval at ET$_m$, with $2\leq m\leq M$. To prove Proposition~\ref{Proposition:LowerBound}, we need to show that the following inequality holds.
\begin{align} \label{Eq:Induction}
\frac{Q_\text{d}}{P} &\geq \sum_{m=1}^M \beta_m + \sum_{i=1,j=1, i\neq j}^M \sqrt{\beta_i \beta_j} \cos(e_i)\cos(e_j).
\end{align}
For simplicity, we assume $P=1$ without loss of generality. We prove \eqref{Eq:Induction} via mathematical induction as follows. First, we show that \eqref{Eq:Induction} holds for $M=2$. For convenience, we denote $Q_\text{d}$ given in \eqref{Eq:P_r Protocol} for $M=k$ as $Q_\text{d}^{(k)}$. In the case of $M=2$, when ET$_2$ adapts its phase $\phi_2$ via (A1), only ET$_1$ is transmitting with fixed phase $\bar{\phi}_1 = 0$ based on the protocol described in Section~\ref{Subsection:Sequential}, and thus the estimated phase at ET$_2$ is given by $\bar{\phi}_2 = \theta_2 - \theta_1 + e_2$. Thus, the harvested power at ER is given by
\begin{equation}
Q_\text{d}^{(2)} = \beta_1 + \beta_2 + 2\sqrt{\beta_1\beta_2}\cos(e_2),
\end{equation}
which implies that  \eqref{Eq:Induction} holds for $M=2$ with equality. Next, we assume that \eqref{Eq:Induction} holds for $M=k$, $k\geq 2$, i.e., the following inequality is true:
\begin{equation} \label{Eq:InductionAssumption}
Q_\text{d}^{(k)} \geq \sum_{m=1}^k \beta_m + \sum_{i=1,j=1, i\neq j}^k \sqrt{\beta_i \beta_j} \cos(e_i)\cos(e_j).
\end{equation}
Then, when $M=k+1$, the harvested power at ER by the distributed EB protocol is given by
\begin{align*}
Q_\text{d}^{(k+1)} &=  \beta_{k+1} + Q_\text{d}^{(k)} + 2 \cos(e_{k+1})\sqrt{\beta_{k+1}Q_\text{d}^{(k)}} \\
& \overset{(a)}\geq \beta_{k+1 } + \sum_{m=1}^k \beta_m + \sum_{i=1,j=1, k\neq l}^k \sqrt{\beta_i \beta_j} \cos(e_i)\cos(e_j)  \\
& \qquad + 2\cos(e_{k+1})\sqrt{\beta_{k+1}}\bigg(\sum_{m=1}^k \beta_m + \sum_{i=1,j=1, i\neq j}^k \sqrt{\beta_i \beta_j} \cos(e_i)\cos(e_j)\bigg)^{\frac{1}{2}}  \\
& \overset{(b)}\geq \sum_{m=1}^{k+1} \beta_m +  \sum_{i=1,j=1, i\neq j}^k \sqrt{\beta_i \beta_j} \cos(e_i)\cos(e_j) \\ 
&\qquad + 2\cos(e_{k+1}) \sqrt{\beta_{k+1}}\bigg(\sum_{m=1}^k \beta_m \cos^2(e_m)  +  \sum_{i=1,j=1, k\neq l}^k \sqrt{\beta_i \beta_j} \cos(e_i)\cos(e_j) \bigg)^{\frac{1}{2}} \\
& = \sum_{m=1}^{k+1} \beta_m +  \sum_{i=1,j=1, i\neq j}^{k+1} \sqrt{\beta_i \beta_j} \cos(e_i)\cos(e_j),
\end{align*}
where $(a)$ is due to the assumption in \eqref{Eq:InductionAssumption}, and $(b)$ is due to the fact that $0\leq \cos^2 (e_m) \leq 1$, $m=1,...,k$.  To summarize, we have shown that \eqref{Eq:Induction} holds for $M=k+1$ under the assumption that it holds for $M=k$. Since we have already shown that \eqref{Eq:Induction} is true for $M=2$, we conclude that \eqref{Eq:Induction} holds for any $M\geq 2$.  

Next, to prove \eqref{Eq:EfficiencyLower} and \eqref{Eq:EfficiencyLower_A2}, we substitute the error bounds given in \eqref{Eq:ErrorBound} and \eqref{Eq:A2_Worst} into \eqref{Eq:Induction}, respectively. It thus follows that
\begin{align} 
Q_\text{d}  \geq   \sum_{m=1}^M \beta_m + \sum_{i=1,j=1, i\neq j}^M \sqrt{\beta_i \beta_j} \cos^2\l(2^{-B\frac{N_\text{t}}{2^B}} \pi\r), \label{Eq:EfficiencyLowerProof}
\end{align}
\begin{align} 
Q_\text{d}  \geq   \sum_{m=1}^M \beta_m + \sum_{i=1,j=1, i\neq j}^M \sqrt{\beta_i \beta_j} \cos^2\l( \frac{\pi}{2^B} \l(\frac{1}{2^B +1}\r)^{\frac{N_\text{t}}{2^B}-1}\r), \label{Eq:EfficiencyLowerProof_A2}
\end{align}
respectively, since $\cos (t)$ is a decreasing function over $0\leq t \leq \pi$. Finally, the desired results in \eqref{Eq:EfficiencyLower} and \eqref{Eq:EfficiencyLower_A2} can be obtained by combining \eqref{Eq:EfficiencyLowerProof} and \eqref{Eq:EfficiencyLowerProof_A2} with \eqref{Eq:Efficiency}, respectively. The proof of Proposition~\ref{Proposition:LowerBound} is thus completed.

\bibliographystyle{IEEEtran}

\newpage

\end{document}